\pgfplotsset{compat=newest}
\newtheorem{proposition}{Proposition}
\newtheorem{remark}{Remark}
\DeclareSIUnit{\belm}{Bm}
\DeclareSIUnit{\dBm}{\deci\belm}
\DeclareSIUnit{\beli}{Bi}
\DeclareSIUnit{\dBi}{\deci\beli}
\DeclareSIUnit{\belA}{BA}
\DeclareSIUnit{\dBA}{\deci\belA}
\algrenewcommand{\algorithmicwhile}{\textbf{While}}
\algrenewcommand{\algorithmicif}{\textbf{If}}
\algrenewcommand{\algorithmicthen}{\textbf{Then}}
\algrenewcommand{\algorithmicelse}{\textbf{Else}}
\algrenewcommand{\algorithmicend}{\textbf{End}}
\algrenewcommand{\algorithmicrepeat}{\textbf{Repeat}}
\algrenewcommand{\algorithmicuntil}{\textbf{Until}}
\begin{document}
	\bstctlcite{IEEEexample:BSTcontrol}
	\title{IRS-Aided SWIPT:\\Joint Waveform, Active and Passive Beamforming Design Under Nonlinear Harvester Model}
	\author{
		\IEEEauthorblockN{
			Yang~Zhao,~\IEEEmembership{Member,~IEEE,}
			~Bruno~Clerckx,~\IEEEmembership{Senior~Member,~IEEE,}
			and~Zhenyuan~Feng,~\IEEEmembership{Member,~IEEE}
		}
		\thanks{
			The authors are with the Department of Electrical and Electronic Engineering, Imperial College London, London SW7 2AZ, U.K. (e-mail: \{yang.zhao18, b.clerckx, zhenyuan.feng19\}@imperial.ac.uk).

			Digital Object Identifier 10.1109/TCOMM.2021.3129931.
		}
	}
	\maketitle

	\begin{abstract}
		The performance of Simultaneous Wireless Information and Power Transfer (SWIPT) is mainly constrained by the received Radio-Frequency (RF) signal strength. To tackle this problem, we introduce an Intelligent Reflecting Surface (IRS) to compensate the propagation loss and boost the transmission efficiency. This paper proposes a novel IRS-aided SWIPT system where a multi-carrier multi-antenna Access Point (AP) transmits information and power simultaneously, with the assist of an IRS, to a single-antenna User Equipment (UE) employing practical receiving schemes. Considering harvester nonlinearity, we characterize the achievable Rate-Energy (R-E) region through a joint optimization of waveform, active and passive beamforming based on the Channel State Information at the Transmitter (CSIT). This problem is solved by the Block Coordinate Descent (BCD) method, where we obtain the active precoder in closed form, the passive beamforming by the Successive Convex Approximation (SCA) approach, and the waveform amplitude by the Geometric Programming (GP) technique. To facilitate practical implementation, we also propose a low-complexity design based on closed-form adaptive waveform schemes. Simulation results demonstrate the proposed algorithms bring considerable R-E gains with robustness to CSIT inaccuracy and finite IRS states, and emphasize the importance of modeling harvester nonlinearity in the IRS-aided SWIPT design.
	\end{abstract}

	\begin{IEEEkeywords}
		Simultaneous wireless information and power transfer, intelligent reflecting surface, waveform design, beamforming design, energy harvester nonlinearity.
	\end{IEEEkeywords}

	\begin{section}{Introduction}
		\begin{subsection}{Simultaneous Wireless Information and Power Transfer}
			\IEEEPARstart{W}{ith} the great advance in communication performance, a bottleneck of wireless networks has come to energy supply. Simultaneous Wireless Information and Power Transfer (SWIPT) is a promising solution to connect and power mobile devices via Radio-Frequency (RF) waves. It provides low power at \si{\uW} level but broad coverage up to hundreds of meters in a sustainable and controllable manner, bringing more opportunities to the Internet of Things (IoT) and Machine to Machine (M2M) networks. The upsurge in wireless devices, together with the decrease of electronics power consumption, calls for a re-thinking of future wireless networks based on Wireless Power Transfer (WPT) and SWIPT \cite{Clerckx2019}.

            The concept of SWIPT was first cast in \cite{Varshney2008}, where the authors investigated the Rate-Energy (R-E) tradeoff for a flat Gaussian channel and typical discrete channels. \cite{Zhou2013} proposed two practical co-located information and power receivers, i.e., Time Switching (TS) and Power Splitting (PS). Dedicated information and energy beamforming were then investigated in \cite{Zhang2013,Park2014} to characterize the R-E region for multi-antenna broadcast and interference channels. On the other hand, \cite{Trotter2009} pointed out that the RF-to-DC conversion efficiency of rectifiers depends on the input power and waveform shape. It implies that the modeling of the energy harvester, particularly its nonlinearity, has a crucial impact on the waveform preference, resource allocation, and system design of any wireless-powered systems \cite{Trotter2009,Clerckx2018,Clerckx2019}. Motivated by this, \cite{Clerckx2016a} derived a tractable nonlinear harvester model based on the Taylor expansion of diode I-V characteristics, and performed joint waveform and beamforming design for WPT. Simulation and experiments showed the benefit of modeling energy harvester nonlinearity in real system design \cite{Kim2019,Kim2020a} and demonstrated the joint waveform and beamforming strategy as a key technique to expand the operation range \cite{Kim2021}. A low-complexity adaptive waveform design by Scaled Matched Filter (SMF) was proposed in \cite{Clerckx2017} to exploit the rectifier nonlinearity, whose advantage was then demonstrated in a prototype with channel acquisition \cite{Kim2017}. Beyond WPT, \cite{Clerckx2018b} uniquely showed that the rectifier nonlinearity brings radical changes to SWIPT design, namely (i) modulated and unmodulated waveforms are not equally suitable for wireless power delivery; (ii) a multi-carrier unmodulated waveform superposed to a multi-carrier modulated waveform can enlarge the R-E region; (iii) a combination of PS and TS is generally the best strategy; (iv) the optimal input distribution is not the conventional Circularly Symmetric Complex Gaussian (CSCG); (v) modeling rectifier nonlinearity is beneficial to system performance and essential to efficient SWIPT design. Those observations, validated experimentally in \cite{Kim2019}, led to the question: \emph{What is the optimal input distribution for SWIPT under nonlinearity?} This question was answered in \cite{Varasteh2020} for single-carrier SWIPT, and some attempts were further made in \cite{Varasteh2019d} for multi-carrier SWIPT. The answers shed new light to the fundamental limits of SWIPT and practical signaling (e.g., modulation and waveform) strategies. It is now well understood from \cite{Clerckx2018b,Varasteh2020,Varasteh2019d} that, due to harvester nonlinearity, a combination of CSCG and on-off keying in single-carrier setting and non-zero mean asymmetric inputs in multi-carrier setting lead to significantly larger R-E region compared to conventional CSCG. Recently, \cite{Varasteh2020a} used machine learning techniques to design SWIPT signaling under nonlinearity to complement the information-theoretic results of \cite{Varasteh2020}, and new modulation schemes were subsequently invented.
		\end{subsection}

		\begin{subsection}{Intelligent Reflecting Surface}
			Intelligent Reflecting Surface (IRS) has recently emerged as a promising technique that adapts the propagation environment to enhance the spectrum and energy efficiency. In practice, an IRS consists of multiple individual sub-wavelength reflecting elements to adjust the amplitude and phase of the incoming signal (i.e., passive beamforming). Different from the relay, backscatter and frequency-selective surface \cite{Anwar2018}, the IRS assists the primary transmission using passive components with negligible thermal noise but is limited to frequency-dependent reflection.

			Inspired by the development of real-time reconfigurable metamaterials \cite{Cui2014}, the authors of \cite{Liaskos2018} introduced a programmable metasurface that steers or polarizes the electromagnetic wave at a specific frequency to mitigate signal attenuation. \cite{Wu2018} proposed an IRS-assisted Multiple-Input Single-Output (MISO) system and jointly optimized the precoder at the Access Point (AP) and the phase shifts at the IRS to minimize the transmit power. The active and passive beamforming problem was then extended to the discrete phase shift case \cite{Wu2019a} and the multi-user case \cite{Wu2019}. In \cite{Abeywickrama2020}, the authors investigated the impact of non-zero resistance on the reflection pattern and emphasized the coupling between reflection amplitude and phase shift in practice. To estimate the cascaded AP-IRS-User Equipment (UE) link without RF-chains at the IRS, practical protocols were developed based on element-wise on/off switching \cite{Nadeem2019}, training sequence and reflection pattern design \cite{You2019,Kang2020}, and compressed sensing \cite{Wang2020}. The hardware architecture, design challenges, and application opportunities of practical IRS were covered in \cite{Wu2020}. In \cite{Dai2020}, a prototype IRS with \num{256} \num{2}-bit elements based on Positive Intrinsic-Negative (PIN) diodes was developed to support real-time video transmission at \si{GHz} and mmWave frequency.
		\end{subsection}

		\begin{subsection}{IRS-Aided SWIPT}
			By integrating IRS with SWIPT, the constructive reflection can boost the end-to-end power efficiency and improve the R-E tradeoff. In multi-user cases, dedicated energy beams were proved unnecessary for the Weighted Sum-Power (WSP) maximization \cite{Wu2020b} but essential when fairness issue is considered \cite{Tang2019}. It was also claimed that Line-of-Sight (LoS) links could boost the WSP since rank-deficient channels require fewer energy beams \cite{Wu2020a}. However, \cite{Wu2020b,Tang2019,Wu2020a} were based on a linear energy harvester model that is known in both the RF and the communication literature to be inefficient and inaccurate \cite{Clerckx2019,Trotter2009,Clerckx2018,Clerckx2016a,Kim2019,Kim2020a,Kim2021,Clerckx2017,Kim2017,Clerckx2018b,Varasteh2020,Varasteh2019d,Varasteh2020a}. Based on practical IRS and harvester models, \cite{Xu2021c} introduced a scalable resource allocation framework for a large-scale tile-based IRS-assisted SWIPT system, where the optimization consists of a reflection design stage and a joint reflection selection and precoder design stage. The proposed framework provides a flexible tradeoff between performance and complexity. To the best of our knowledge, all existing papers considered resource allocation and beamforming design for dedicated information and energy users in a single-carrier network. In this paper, we instead build our design based on a proper nonlinear harvester model that captures the dependency of the output DC power on both the power and shape of the input waveform, and marry the benefits of joint multi-carrier waveform and active beamforming optimization for SWIPT with the passive beamforming capability of IRS, to investigate the R-E tradeoff for one SWIPT user with co-located information decoder and energy harvester. We ask ourselves the important question: \emph{How to jointly exploit the spatial domain and the frequency domain efficiently through joint waveform and beamforming design to enlarge the R-E region of IRS-aided SWIPT?} The contributions of this paper are summarized as follows.

			\emph{First,} we propose a novel IRS-aided SWIPT architecture based on joint waveform, active and passive beamforming design under the diode nonlinear model \cite{Clerckx2016a}. Although this tractable harvester model accurately reveals how the input power level and waveform shape influence the output DC power, it also introduces design challenges such as frequency coupling (i.e., components of different frequencies compensate and produce DC), waveform coupling (i.e., different waveforms jointly contribute to DC), and high-order objective function. To make an efficient use of the rectifier nonlinearity, we superpose a multi-carrier unmodulated power waveform (deterministic multisine) to a multi-carrier modulated information waveform and evaluate the performance under the TS and PS receiving modes. The proposed joint waveform, active and passive beamforming architecture exploits the rectifier nonlinearity, the channel selectivity, and a beamforming gain across frequency and spatial domains to enlarge the achievable R-E region. This is the first paper to propose a joint waveform, active and passive beamforming architecture for IRS-aided SWIPT.

			\emph{Second,} we characterize each R-E boundary point by energy maximization under a rate constraint. The problem is solved by a Block Coordinate Descent (BCD) algorithm based on the Channel State Information at the Transmitter (CSIT). For active beamforming, we prove that the global optimal active information and power precoders coincide at Maximum-Ratio Transmission (MRT) even with rectifier nonlinearity. For passive beamforming, we propose a Successive Convex Approximation (SCA) algorithm and retrieve the IRS phase shift by eigen decomposition with optimality proof. Finally, the superposed waveform and the splitting ratio are optimized by the Geometric Programming (GP) technique. The IRS phase shift, active precoder, and waveform amplitude are updated iteratively until convergence. This is the first paper to jointly optimize waveform and active/passive beamforming in IRS-aided SWIPT.

			\emph{Third,} we introduce two closed-form adaptive waveform schemes to avoid the exponential complexity of the GP algorithm. To facilitate practical SWIPT implementation, the Water-Filling (WF) strategy for modulated waveform and the SMF strategy for multisine waveform are combined in time and power domains, respectively. The passive beamforming design is also adapted to accommodate the low-complexity waveform schemes. The proposed low-complexity BCD algorithm achieves a good balance between performance and complexity.

			\emph{Fourth,} we provide numerical results to evaluate the proposed algorithms. It is concluded that (i) IRS enables constructive reflection and flexible subchannel design in the frequency domain that is essential for SWIPT systems; (ii) IRS mainly affects the effective channel instead of the waveform design; (iii) multisine waveform is beneficial to energy transfer especially when the number of subbands is large; (iv) TS is preferred at low Signal-to-Noise Ratio (SNR) while PS is preferred at high SNR; (v) there exist two optimal IRS development locations, one close to the AP and one close to the UE; (vi) the output SNR scales linearly with the number of transmit antennas and quadratically with the number of IRS elements; (vii) due to the rectifier nonlinearity, the output DC scales quadratically with the number of transmit antennas and quartically with the number of IRS elements; (viii) for narrowband SWIPT, the optimal active and passive beamforming for any R-E point are also optimal for the whole R-E region; (ix) for broadband SWIPT, the optimal active and passive beamforming depend on specific R-E point and require adaptive designs; (x) the proposed algorithms are robust to practical impairments such as inaccurate cascaded CSIT and finite IRS reflection states.

			\emph{Organization:} Section~\ref{se:system_model} introduces the system model. Section~\ref{se:problem_formulation} formulates the problem and tackles the waveform, active and passive beamforming design. Section~\ref{se:performance_evaluation} provides simulation results. Section~\ref{se:conclusion_and_future_works} concludes the paper.

			\emph{Notations:} Scalars, vectors and matrices are denoted respectively by italic, bold lower-case, and bold upper-case letters. $j$ denotes the imaginary unit. $\boldsymbol{0}$ and $\boldsymbol{1}$ denote respectively zero and one vector or matrix. $\boldsymbol{I}$ denotes the identity matrix. $\mathbb{R}_+^{x \times y}$ and $\mathbb{C}^{x \times y}$ denote respectively the space of real nonnegative and complex $x \times y$ matrices. $\Re\{\cdot\}$ retrieves the real part of a complex entity. $[\cdot]_{(n)}$ denotes the $n$-th entry of a vector and $[\cdot]_{(1:n)}$ denotes the first $n$ entries of a vector. $(\cdot)^*$, $(\cdot)^T$, $(\cdot)^H$, $(\cdot)^+$, $\lvert{\cdot}\rvert$, $\lVert{\cdot}\rVert$ represent respectively the conjugate, transpose, conjugate transpose, ramp function, absolute value, and Euclidean norm. $\arg(\cdot)$, $\mathrm{rank}(\cdot)$, $\mathrm{tr}(\cdot)$, $\mathrm{diag}(\cdot)$ and $\mathrm{diag}^{-1}(\cdot)$ denote respectively the argument, rank, trace, a square matrix with input vector on the main diagonal, and a vector retrieving the main diagonal of the input matrix. $\odot$ denotes the Hadamard product. $\boldsymbol{S} \succeq \boldsymbol{0}$ means $\boldsymbol{S}$ is positive semi-definite. $\mathbb{A}\{\cdot\}$ extracts the DC component of a signal. $\mathbb{E}_X\{\cdot\}$ takes expectation w.r.t. random variable $X$ ($X$ is omitted for simplicity). The distribution of a CSCG random vector with mean $\boldsymbol{0}$ and covariance $\boldsymbol{\Sigma}$ is denoted by $\mathcal{CN}(\boldsymbol{0},\boldsymbol{\Sigma})$. $\sim$ means ``distributed as''. $(\cdot)^{(i)}$ and $(\cdot)^{\star}$ denote respectively the $i$-th iterated value and the stationary solution.
		\end{subsection}
	\end{section}

	\begin{section}{System Model}\label{se:system_model}
		\begin{figure}[!t]
			\centering
			\def\svgwidth{0.9\columnwidth}
\begingroup%
  \makeatletter%
  \providecommand\color[2][]{%
    \errmessage{(Inkscape) Color is used for the text in Inkscape, but the package 'color.sty' is not loaded}%
    \renewcommand\color[2][]{}%
  }%
  \providecommand\transparent[1]{%
    \errmessage{(Inkscape) Transparency is used (non-zero) for the text in Inkscape, but the package 'transparent.sty' is not loaded}%
    \renewcommand\transparent[1]{}%
  }%
  \providecommand\rotatebox[2]{#2}%
  \newcommand*\fsize{\dimexpr\f@size pt\relax}%
  \newcommand*\lineheight[1]{\fontsize{\fsize}{#1\fsize}\selectfont}%
  \ifx\svgwidth\undefined%
    \setlength{\unitlength}{252.90451594bp}%
    \ifx\svgscale\undefined%
      \relax%
    \else%
      \setlength{\unitlength}{\unitlength * \real{\svgscale}}%
    \fi%
  \else%
    \setlength{\unitlength}{\svgwidth}%
  \fi%
  \global\let\svgwidth\undefined%
  \global\let\svgscale\undefined%
  \makeatother%
  \begin{picture}(1,0.4981921)%
    \lineheight{1}%
    \setlength\tabcolsep{0pt}%
    \put(0,0){\includegraphics[width=\unitlength]{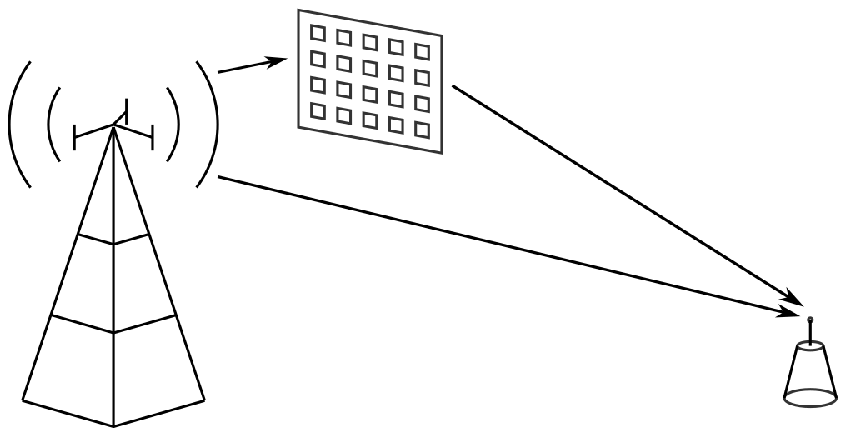}}%
    \put(-0.00020079,0.17648561){\color[rgb]{0,0,0}\makebox(0,0)[lt]{\lineheight{1.25}\smash{\begin{tabular}[t]{l}AP\end{tabular}}}}%
    \put(0.40247465,0.48648438){\color[rgb]{0,0,0}\makebox(0,0)[lt]{\lineheight{1.25}\smash{\begin{tabular}[t]{l}IRS\end{tabular}}}}%
    \put(0.97930298,0.05996617){\color[rgb]{0,0,0}\makebox(0,0)[lt]{\lineheight{1.25}\smash{\begin{tabular}[t]{l}UE\end{tabular}}}}%
    \put(0.44815942,0.171038){\color[rgb]{0,0,0}\makebox(0,0)[lt]{\lineheight{1.25}\smash{\begin{tabular}[t]{l}$\boldsymbol{h}_{\mathrm{D},n}^H$\end{tabular}}}}%
    \put(0.6588808,0.31057413){\color[rgb]{0,0,0}\makebox(0,0)[lt]{\lineheight{1.25}\smash{\begin{tabular}[t]{l}$\boldsymbol{h}_{\mathrm{R},n}^H$\end{tabular}}}}%
    \put(0.23848906,0.43809175){\color[rgb]{0,0,0}\makebox(0,0)[lt]{\lineheight{1.25}\smash{\begin{tabular}[t]{l}$\boldsymbol{H}_{\mathrm{I},n}$\end{tabular}}}}%
  \end{picture}%
\endgroup%

			\caption{An IRS-aided multi-carrier MISO SWIPT system.}
			\label{fi:system}
		\end{figure}

		As shown in Fig.~\ref{fi:system}, we propose an IRS-aided SWIPT system where an $M$-antenna AP delivers information and power simultaneously, through an $L$-element IRS, to a single-antenna UE over $N$ orthogonal evenly-spaced subbands. We consider a quasi-static block fading model and assume the CSIT of direct and cascaded channels is known. The signals reflected by two or more times are omitted, and the noise power is assumed too small to be harvested.

		\begin{subsection}{Transmitted Signal}
			Following \cite{Clerckx2018b}, we superpose a multi-carrier modulated information-bearing waveform to a multi-carrier unmodulated power-dedicated deterministic multisine to boost the spectrum and energy efficiency. The information signal transmitted over subband $n \in \{1, \dots, N\}$ at time $t$ is
			\begin{equation}
				\boldsymbol{x}_{\mathrm{I},n}(t) = \Re\left\{\boldsymbol{w}_{\mathrm{I},n} \tilde{x}_{\mathrm{I},n}(t) e^{j2{\pi}{f_n}{t}}\right\},
			\end{equation}
			where $\boldsymbol{w}_{\mathrm{I},n} \in \mathbb{C}^{M \times 1}$ is the information precoder at subband $n$, $\tilde{x}_{\mathrm{I},n}\sim\mathcal{CN}(0,1)$ is the information symbol at subband $n$, and $f_n$ is the frequency of subband $n$. On the other hand, the power signal transmitted over subband $n$ at time $t$ is
			\begin{equation}
				\boldsymbol{x}_{\mathrm{P},n}(t) = \Re\left\{\boldsymbol{w}_{\mathrm{P},n} e^{j2{\pi}{f_n}{t}}\right\},
			\end{equation}
			where $\boldsymbol{w}_{\mathrm{P},n} \in \mathbb{C}^{M \times 1}$ is the power precoder at subband $n$. Therefore, the superposed signal transmitted over all subbands at time $t$ is
			\begin{equation}
				\boldsymbol{x}(t) = \Re{\left\{\sum_{n=1}^N(\boldsymbol{w}_{\mathrm{I},n}\tilde{x}_{\mathrm{I},n}(t)+\boldsymbol{w}_{\mathrm{P},n}){e^{j2{\pi}{f_n}{t}}}\right\}}.
			\end{equation}
			We also define $\boldsymbol{w}_{\mathrm{I/P}} \triangleq [\boldsymbol{w}_{\mathrm{I/P},1}^T,\dots,\boldsymbol{w}_{\mathrm{I/P},N}^T]^T \in \mathbb{C}^{MN \times 1}$.
		\end{subsection}

		\begin{subsection}{Reflection Pattern and Composite Channel}
			According to Green's decomposition \cite{Hansen1989}, the backscattered signal of an antenna can be decomposed into the \emph{structural mode} component and the \emph{antenna mode} component. The former is fixed and can be regarded as part of the environment multipath, while the latter is adjustable and depends on the mismatch of the antenna and load impedance. IRS element $l \in \{1, \dots, L\}$ varies its impedance $Z_l = R_l + j X_l$ to reflect the incoming signal, and the reflection coefficient is defined as
			\begin{equation}
				\phi_l = \frac{Z_l - Z_0}{Z_l + Z_0} \triangleq \eta_l e^{j\theta_l},
			\end{equation}
			where $Z_0$ is the real-valued characteristic impedance, $\eta_l \in [0, 1]$ is the reflection amplitude,\footnote{Due to the non-zero power consumption at the IRS, $R_l > 0$ in practice such that $\eta_l < 1$ and is a function of $\theta_l$. This paper sticks to the most common IRS model where the reflection amplitude equals \num{1} so as to reduce the design complexity and provide a primary benchmark for practical IRS-aided SWIPT.} and $\theta_l \in [0,2\pi)$ is the phase shift. We also define $\boldsymbol{\phi} \triangleq [\phi_1, \dots, \phi_L]^H \in \mathbb{C}^{L \times 1}$ and $\boldsymbol{\Theta} \triangleq \mathrm{diag}(\phi_1, \dots, \phi_L)=\mathrm{diag}(\boldsymbol{\Phi}^*) \in \mathbb{C}^{L \times L}$ as the IRS vector and matrix, respectively.

			\begin{remark}\label{re:reflection_coefficient}
				The element impedance $Z_l$ maps to the reflection coefficient $\phi_l$ uniquely. Since the reactance $X_l$ depends on the frequency, the reflection coefficient $\phi_l$ is also a function of frequency and cannot be designed independently at different subbands. In this paper, we assume the bandwidth is small compared to the operating frequency such that the reflection coefficient of each IRS element is the same at all subbands.
			\end{remark}

			At subband $n$, we denote the AP-UE direct channel as $\boldsymbol{h}_{\mathrm{D},n}^H \in \mathbb{C}^{1 \times M}$, the AP-IRS incident channel as $\boldsymbol{H}_{\mathrm{I},n} \in \mathbb{C}^{L \times M}$, and the IRS-UE reflected channel as $\boldsymbol{h}_{\mathrm{R},n}^H \in \mathbb{C}^{1 \times L}$. The auxiliary AP-IRS-UE link can be modeled as a concatenation of the incident channel, the IRS reflection, and the reflected channel. Hence, the composite equivalent channel reduces to
			\begin{equation}\label{eq:h_n}
				\boldsymbol{h}_{n}^H = \boldsymbol{h}_{\mathrm{D},n}^H + \boldsymbol{h}_{\mathrm{R},n}^H \boldsymbol{\Theta} \boldsymbol{H}_{\mathrm{I},n} = \boldsymbol{h}_{\mathrm{D},n}^H + \boldsymbol{\phi}^H \boldsymbol{V}_{n},
			\end{equation}
			where we define the cascaded incident-reflected channel at subband $n$ as $\boldsymbol{V}_{n} \triangleq \mathrm{diag}(\boldsymbol{h}_{\mathrm{R},n}^H)\boldsymbol{H}_{\mathrm{I},n} \in \mathbb{C}^{L \times M}$. We also define $\boldsymbol{h} \triangleq [\boldsymbol{h}_1^T,\dots,\boldsymbol{h}_N^T]^T \in \mathbb{C}^{MN \times 1}$.

			\begin{remark}\label{re:subband_tradeoff}
				The cascaded channel varies at different subbands, but the reflection cannot be designed independently at different frequencies. Therefore, there exists a tradeoff for the passive beamforming design in the frequency domain, and the composite subchannels should be tuned adaptively to meet the specific requirement of multi-carrier SWIPT. For example, one can design the reflection pattern to either enhance the strongest subband (e.g., $\max_{\boldsymbol{\phi},n} \lVert \boldsymbol{h}_n \rVert$), or improve the fairness among subbands (e.g., $\max_{\boldsymbol{\phi}} \min_n \lVert \boldsymbol{h}_n \rVert$). That is to say, IRS essentially enables a flexible subchannel design. In the MISO case, a similar effect also exists in the spatial domain. Therefore, each reflection coefficient is indeed shared by $M$ antennas over $N$ subbands.
			\end{remark}
		\end{subsection}

		\begin{subsection}{Received Signal}
			The received superposed signal at the single-antenna UE is\footnote{We assume that the time difference of signal arrival via direct and auxiliary link is negligible compared to the symbol period.}
			\begin{equation}
				y(t) = \Re\left\{\sum_{n=1}^N{\Bigl(\boldsymbol{h}_{n}^H}{(\boldsymbol{w}_{\mathrm{I},n}\tilde{x}_{\mathrm{I},n}(t)+\boldsymbol{w}_{\mathrm{P},n})+\tilde{n}_n(t)\Bigr)}{e^{j2{\pi}{f_n}{t}}}\right\},
			\end{equation}
			where $\tilde{n}_n(t)$ is the noise at RF band $n$. Note that the modulated component can be used for energy harvesting if necessary, but the multisine component carries no information and cannot be used for information decoding.
		\end{subsection}

		\begin{subsection}{Receiving Modes}
			\begin{figure}[!t]
				\centering
				\subfloat[TS receiver\label{fi:ts_receiver}]{
					\resizebox{0.9\columnwidth}{!}{
						\includegraphics{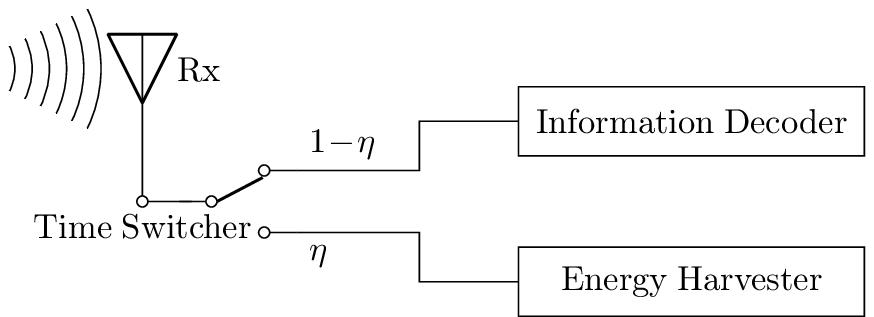}
					}
				}
				\\
				\subfloat[PS receiver\label{fi:ps_receiver}]{
					\resizebox{0.9\columnwidth}{!}{
						\includegraphics{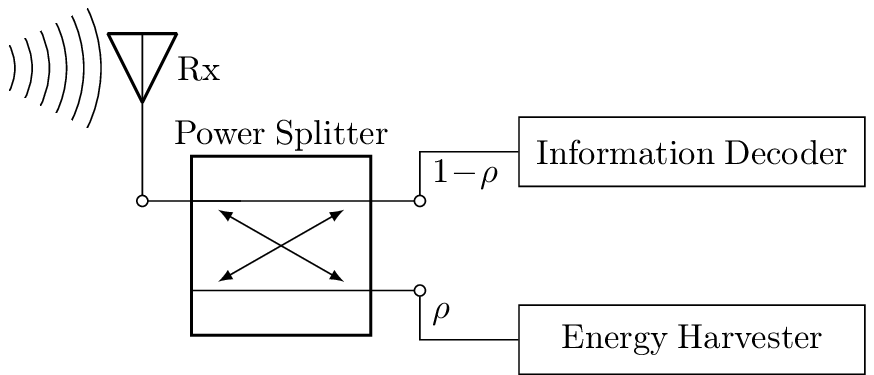}
					}
				}
				\caption{Diagrams of practical co-located receivers.}
				\label{fi:receiver}
			\end{figure}

			As illustrated in Fig.~\ref{fi:receiver}, there are two practical receiving modes for the co-located information decoder and energy harvester \cite{Zhou2013}. The TS receiver divides each transmission block into orthogonal data and energy sessions with duration $1-\eta$ and $\eta$, respectively. During each session, the transmitter optimizes the waveform for either Wireless Information Transfer (WIT) or WPT, while the receiver activates the information decoder or the energy harvester correspondingly. The duration ratio $\eta$ controls the R-E tradeoff and is independent from the waveform and beamforming design. On the other hand, the PS receiver splits the incoming signal into individual data and energy streams with power ratio $1-\rho$ and $\rho$, respectively. The data stream is fed into the information decoder while the energy stream is fed into the energy harvester. During each transmission block, the superposed waveform and splitting ratio are jointly designed to achieve different R-E tradeoffs. In the following context, we consider the optimization with the PS receiver, as the TS receiver can be regarded as a special case (i.e., a time sharing between $\rho=0$ and $\rho=1$).
		\end{subsection}

		\begin{subsection}{Information Decoder}
			A major benefit of the superposed waveform is that the multisine is deterministic and creates no interference to the modulated waveform \cite{Clerckx2018b}. Therefore, the achievable rate is\footnote{It requires waveform cancellation or translated demodulation \cite{Clerckx2018b}.}
			\begin{equation}\label{eq:R}
				R(\boldsymbol{\phi},\boldsymbol{w}_{\mathrm{I}},\rho) = \sum_{n=1}^N{\log_2\left(1+\frac{(1-\rho)\lvert \boldsymbol{h}_{n}^H\boldsymbol{w}_{\mathrm{I},n} \rvert^2}{\sigma_n^2}\right)},
			\end{equation}
			where $\sigma_n^2$ is the noise variance at the RF-band and during the RF-to-baseband conversion on tone $n$.
		\end{subsection}

		\begin{subsection}{Energy Harvester}\label{se:energy_harvester}
			\begin{figure}[!t]
				\centering
				\subfloat[Antenna equivalent circuit\label{fi:antenna_equivalent_circuit}]{
					\resizebox{0.45\columnwidth}{!}{
						\includegraphics{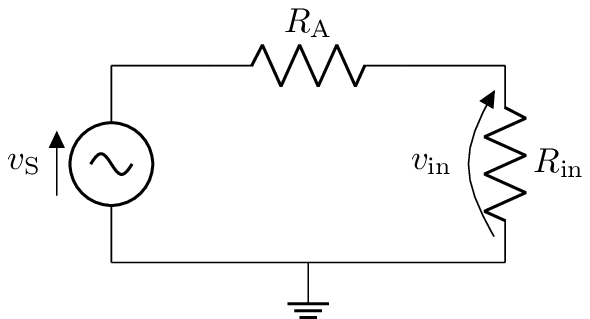}
					}
				}
				\subfloat[A single diode rectifier\label{fi:single_diode_rectifier}]{
					\resizebox{0.45\columnwidth}{!}{
						\includegraphics{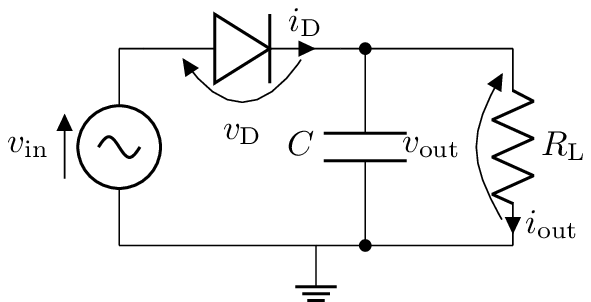}
					}
				}
				\caption{Equivalent circuits of receive antenna and energy harvester.}
			\end{figure}

			Taken from \cite{Clerckx2016a}, the rectenna model used in this section captures the dependency of the output DC on both the power and shape of the received signal. Fig.~\subref*{fi:antenna_equivalent_circuit} illustrates the equivalent circuit of an ideal antenna, where the antenna has a resistance $R_{\mathrm{A}}$ and the incoming signal creates a voltage source $v_{\mathrm{S}}(t)$. Let $R_{\mathrm{in}}$ be the total input resistance of the rectifier and matching network, and we assume the voltage across the matching network is negligible. When perfectly matched (i.e., $R_{\mathrm{in}}=R_{\mathrm{A}}$), the rectifier input voltage is $v_{\mathrm{in}}(t)=y(t)\sqrt{\rho R_{\mathrm{A}}}$. Consider a simplified rectifier model in Fig.~\subref*{fi:single_diode_rectifier} where a single series diode is followed by a low-pass filter with a parallel load. As detailed in \cite{Clerckx2018b}, a truncated Taylor expansion of the diode I-V characteristic equation suggests that maximizing the average output DC is equivalent to maximizing a monotonic function\footnote{This small-signal expansion model is only valid for the nonlinear operation region of the diode, and the I-V relationship would be linear if the diode behavior is dominated by the load \cite{Clerckx2016a}.}
			\begin{equation}\label{eq:z}
				z(\boldsymbol{\phi},\boldsymbol{w}_{\mathrm{I}},\boldsymbol{w}_{\mathrm{P}},\rho)=\sum_{i\,\mathrm{even},i\ge2}^{n_0}{k_i}{\rho^{i/2}}{R_{\mathrm{A}}^{i/2}}{\mathbb{E}\left\{\mathbb{A}\left\{y(t)^i\right\}\right\}},
			\end{equation}
			where $n_0$ is the truncation order and $k_i \triangleq i_{\mathrm{S}}/i!(n'v_{\mathrm{T}})^i$ is the diode coefficient ($i_{\mathrm{S}}$ is the reverse bias saturation current, $n'$ is the diode ideality factor, $v_{\mathrm{T}}$ is the thermal voltage). With a slight abuse of notation, we refer to $z$ as the average output DC in this paper. It can be observed that the conventional linear harvester model, where the output DC power equals the sum of the power harvested on each frequency, is a special case of \eqref{eq:z} with $n_0=2$. However, due to the coupling effect among different frequencies, some high-order AC components compensate each other in frequency and further contribute to the output DC power. In other words, even-order terms with $i \ge 4$ account for the nonlinear diode behavior. For simplicity, we choose $n_0=4$ to investigate the fundamental rectifier nonlinearity, and define $\beta_2 \triangleq {k_2}{R_{\mathrm{A}}}$, $\beta_4 \triangleq {k_4}{R_{\mathrm{A}}^2}$ to rewrite $z$ by \eqref{eq:z_expand}. Note that $\mathbb{E}\left\{\lvert\tilde{x}_{\mathrm{I},n}\rvert^2\right\}=1$ but $\mathbb{E}\left\{\lvert\tilde{x}_{\mathrm{I},n}\rvert^4\right\}=2$ applies a modulation gain on the fourth-order DC terms. Let $\boldsymbol{W}_{\mathrm{I/P}} \triangleq \boldsymbol{w}_{\mathrm{I/P}}\boldsymbol{w}_{\mathrm{I/P}}^H \in \mathbb{C}^{MN \times MN}$. As illustrated by Fig.~\ref{fi:block_diagonal}, $\boldsymbol{W}_{\mathrm{I/P}}$ can be divided into $N \times N$ blocks of size $M \times M$, and we let $\boldsymbol{W}_{\mathrm{I/P},k}$ keep its block diagonal $k \in \{-N+1,\dots,N-1\}$ and set all other blocks to $\boldsymbol{0}$. Hence, the components of $z$ reduce to \eqref{eq:y_I2}--\eqref{eq:y_P4} \cite{Golub2013}.

			\begin{figure*}[!t]
				\begin{equation*}
					\boldsymbol{W}_{\mathrm{I/P}}=
					\begin{tikzpicture}[>=stealth,thick,baseline,every right delimiter/.append style={name=rd},]
						\matrix [matrix of math nodes,left delimiter=(,right delimiter=)] (m)
						{
							\boldsymbol{w}_{\mathrm{I/P},1}\boldsymbol{w}_{\mathrm{I/P},1}^H & \boldsymbol{w}_{\mathrm{I/P},1}\boldsymbol{w}_{\mathrm{I/P},2}^H & \dots & \boldsymbol{w}_{\mathrm{I/P},1}\boldsymbol{w}_{\mathrm{I/P},N}^H \\
							\boldsymbol{w}_{\mathrm{I/P},2}\boldsymbol{w}_{\mathrm{I/P},1}^H & \boldsymbol{w}_{\mathrm{I/P},2}\boldsymbol{w}_{\mathrm{I/P},2}^H & \ddots & \vdots \\
							\vdots & \ddots & \ddots & \boldsymbol{w}_{\mathrm{I/P},N-1}\boldsymbol{w}_{\mathrm{I/P},N}^H \\
							\boldsymbol{w}_{\mathrm{I/P},N}\boldsymbol{w}_{\mathrm{I/P},1}^H & \dots & \boldsymbol{w}_{\mathrm{I/P},N}\boldsymbol{w}_{\mathrm{I/P},N-1}^H & \boldsymbol{w}_{\mathrm{I/P},N}\boldsymbol{w}_{\mathrm{I/P},N}^H \\
						};
						\draw[dotted,thick] (m-4-1.north west) rectangle (m-4-1.south east);
						\draw[dotted,thick,fill=gray,opacity=0.125] (m-2-1.north west) rectangle (m-2-1.south east); \draw[dotted,thick,fill=gray,opacity=0.125] (m-3-2.north west) rectangle (m-3-2.south east); \draw[dotted,thick,fill=gray,opacity=0.125] (m-4-3.north west) rectangle (m-4-3.south east);
						\draw[dashed,thick,fill=gray,opacity=0.5] (m-1-1.north west) rectangle (m-1-1.south east); \draw[dashed,thick,fill=gray,opacity=0.5] (m-2-2.north west) rectangle (m-2-2.south east); \draw[dashed,thick,fill=gray,opacity=0.5] (m-3-3.north west) rectangle (m-3-3.south east); \draw[dashed,thick,fill=gray,opacity=0.5] (m-4-4.north west) rectangle (m-4-4.south east);
						\draw[solid,thick,fill=gray,opacity=0.25] (m-1-2.north west) rectangle (m-1-2.south east); \draw[solid,thick,fill=gray,opacity=0.25] (m-2-3.north west) rectangle (m-2-3.south east); \draw[solid,thick,fill=gray,opacity=0.25] (m-3-4.north west) rectangle (m-3-4.south east);
						\draw[solid,thick] (m-1-4.north west) rectangle (m-1-4.south east);
						\draw[<-] (m-4-3.south|-m.south) -- ++(0.5,-0.15) node[below]{$k=-1$};
						\draw[<-] (rd.east|-m.south) -- ++(0.5,-0.15) node[right]{$k=0$};
						\draw[<-] (rd.east|-m-3-4.east) -- ++(0.5,-0.15) node[right]{$k=1$};
					\end{tikzpicture}
				\end{equation*}
				\caption{$\boldsymbol{W}_{\mathrm{I/P}}$ consists of $N \times N$ blocks of size $M \times M$. $\boldsymbol{W}_{\mathrm{I/P},k}$ keeps the $k$-th block diagonal of $\boldsymbol{W}_{\mathrm{I/P}}$ and nulls all remaining blocks. Solid, dashed and dotted blocks correspond to $k>0$, $k=0$ and $k<0$, respectively. For $\boldsymbol{w}_{\mathrm{I/P},n_1}\boldsymbol{w}_{\mathrm{I/P},n_2}^H$, the $k$-th block diagonal satisfies $k=n_2-n_1$.}
				\label{fi:block_diagonal}
			\end{figure*}
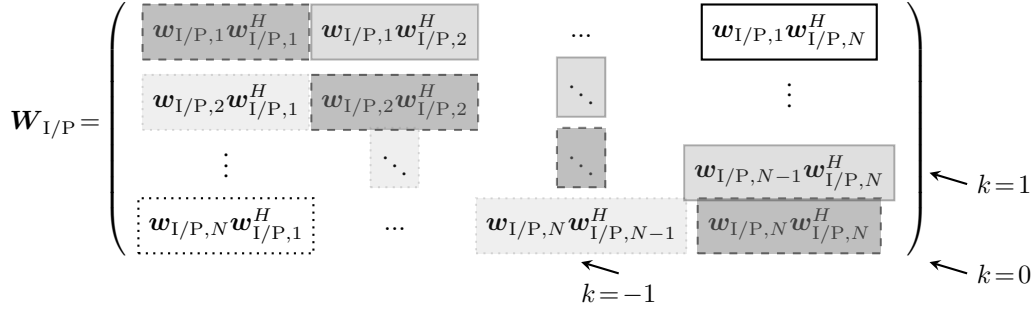

			\begin{figure*}[!b]
				\hrule
				\begin{align}
					&z(\boldsymbol{\phi},\boldsymbol{w}_{\mathrm{I}},\boldsymbol{w}_{\mathrm{P}},\rho) = \beta_2\rho\Bigl(\mathbb{E}\left\{\mathbb{A}\left\{y_{\mathrm{I}}^2(t)\right\}\right\}+\mathbb{A}\left\{y_{\mathrm{P}}^2(t)\right\}\Bigr)+\beta_4\rho^2\Bigl(\mathbb{E}\left\{\mathbb{A}\left\{y_{\mathrm{I}}^4(t)\right\}\right\}+\mathbb{A}\left\{y_{\mathrm{P}}^4(t)\right\}+6\mathbb{E}\left\{\mathbb{A}\left\{y_{\mathrm{I}}^2(t)\right\}\right\}\mathbb{A}\left\{y_{\mathrm{P}}^2(t)\right\}\Bigr),\label{eq:z_expand}\\
					&\mathbb{E}\left\{\mathbb{A}\left\{y_{\mathrm{I}}^2(t)\right\}\right\} = \frac{1}{2}\sum_{n=1}^N{(\boldsymbol{h}_{n}^H\boldsymbol{w}_{\mathrm{I},n})(\boldsymbol{h}_{n}^H\boldsymbol{w}_{\mathrm{I},n})^*} = \frac{1}{2}\boldsymbol{h}^H\boldsymbol{W}_{\mathrm{I},0}\boldsymbol{h},\label{eq:y_I2}\\
					&\mathbb{E}\left\{\mathbb{A}\left\{y_{\mathrm{I}}^4(t)\right\}\right\} = \frac{3}{4}\left(\sum_{n=1}^N{(\boldsymbol{h}_{n}^H\boldsymbol{w}_{\mathrm{I},n})(\boldsymbol{h}_{n}^H\boldsymbol{w}_{\mathrm{I},n})^*}\right)^2 = \frac{3}{4}(\boldsymbol{h}^H\boldsymbol{W}_{\mathrm{I},0}\boldsymbol{h})^2,\label{eq:y_I4}\\
					&\mathbb{A}\left\{y_{\mathrm{P}}^2(t)\right\} = \frac{1}{2}\sum_{n=1}^N{(\boldsymbol{h}_{n}^H\boldsymbol{w}_{\mathrm{P},n})(\boldsymbol{h}_{n}^H\boldsymbol{w}_{\mathrm{P},n})^*} = \frac{1}{2}\boldsymbol{h}^H\boldsymbol{W}_{\mathrm{P},0}\boldsymbol{h},\label{eq:y_P2}\\
					&\mathbb{A}\left\{y_{\mathrm{P}}^4(t)\right\} = \frac{3}{8}\sum_{\substack{{n_1},{n_2},{n_3},{n_4}\\{n_1}+{n_2}={n_3}+{n_4}}}{(\boldsymbol{h}_{{n_1}}^H\boldsymbol{w}_{\mathrm{P},{n_1}})(\boldsymbol{h}_{{n_2}}^H\boldsymbol{w}_{\mathrm{P},{n_2}})(\boldsymbol{h}_{{n_3}}^H\boldsymbol{w}_{\mathrm{P},{n_3}})^*(\boldsymbol{h}_{{n_4}}^H\boldsymbol{w}_{\mathrm{P},{n_4}})^*} = \frac{3}{8}\sum_{k=-N+1}^{N-1}(\boldsymbol{h}^H\boldsymbol{W}_{\mathrm{P},k}\boldsymbol{h})(\boldsymbol{h}^H\boldsymbol{W}_{\mathrm{P},k}\boldsymbol{h})^*.\label{eq:y_P4}
				\end{align}
			\end{figure*}
		\end{subsection}

		\begin{subsection}{Rate-Energy Region}
			The achievable R-E region is defined as
			\begin{align}
				\mathcal{C}_{\mathrm{R-E}}
				&\triangleq \biggl\{(R_{\mathrm{ID}}, z_{\mathrm{EH}}) \in \mathbb{R}_+^2 \mid R_{\mathrm{ID}} \le R, z_{\mathrm{EH}} \le z,\nonumber\\
				&\quad \frac{1}{2}\left(\lVert{\boldsymbol{w}_{\mathrm{I}}}\rVert^2+\lVert{\boldsymbol{w}_{\mathrm{P}}}\rVert^2\right) \le P\biggr\},
			\end{align}
			where $P$ is the average transmit power budget and \num{1/2} converts the peak value of the sine waves to the average value.
		\end{subsection}
	\end{section}

	\begin{section}{Problem Formulation}\label{se:problem_formulation}
		We characterize each R-E boundary point through a current maximization problem subject to sum rate, transmit power, and reflection amplitude constraints as
		\begin{maxi!}
			{\scriptstyle{\boldsymbol{\phi},\boldsymbol{w}_{\mathrm{I}},\boldsymbol{w}_{\mathrm{P}},\rho}}{z(\boldsymbol{\phi},\boldsymbol{w}_{\mathrm{I}},\boldsymbol{w}_{\mathrm{P}},\rho)}{\label{op:original}}{\label{ob:original}}
			\addConstraint{R(\boldsymbol{\phi},\boldsymbol{w}_{\mathrm{I}},\rho) \ge \bar{R}}\label{co:original_rate}
			\addConstraint{\frac{1}{2}\left(\lVert{\boldsymbol{w}_{\mathrm{I}}}\rVert^2+\lVert{\boldsymbol{w}_{\mathrm{P}}}\rVert^2\right)\le{P}}\label{co:original_power}
			\addConstraint{\lvert{\boldsymbol{\phi}}\rvert=\boldsymbol{1}}\label{co:original_modulus}
			\addConstraint{0 \le \rho \le 1.}
		\end{maxi!}
		Problem~\eqref{op:original} is intricate because of the coupled variables in \eqref{ob:original}, \eqref{co:original_rate} and the non-convex constraint \eqref{co:original_modulus}. To obtain a feasible solution, we propose a BCD algorithm that iteratively updates (i) the IRS phase shift; (ii) the active precoder; (iii) the waveform amplitude and splitting ratio, until convergence.

		\begin{subsection}{Passive Beamforming}
			In this section, we optimize the IRS phase shift $\boldsymbol{\phi}$ for any given waveform $\boldsymbol{w}_{\mathrm{I/P}}$ and splitting ratio $\rho$. Note that
			\begin{align}
				\lvert \boldsymbol{h}_{n}^H\boldsymbol{w}_{\mathrm{I},n} \rvert^2
				& = \boldsymbol{w}_{\mathrm{I},n}^H\boldsymbol{h}_n\boldsymbol{h}_n^H\boldsymbol{w}_{\mathrm{I},n}\nonumber\\
				& = \boldsymbol{w}_{\mathrm{I},n}^H(\boldsymbol{h}_{\mathrm{D},n}+\boldsymbol{V}_n^H\boldsymbol{\phi})(\boldsymbol{h}_{\mathrm{D},n}^H+\boldsymbol{\phi}^H\boldsymbol{V}_n)\boldsymbol{w}_{\mathrm{I},n}\nonumber\\
				& = \boldsymbol{w}_{\mathrm{I},n}^H\boldsymbol{M}_n^H\boldsymbol{\Phi}\boldsymbol{M}_n\boldsymbol{w}_{\mathrm{I},n}\nonumber\\
				& = \mathrm{tr}(\boldsymbol{M}_n\boldsymbol{w}_{\mathrm{I},n}\boldsymbol{w}_{\mathrm{I},n}^H\boldsymbol{M}_n^H\boldsymbol{\Phi})\nonumber\\
				& = \mathrm{tr}(\boldsymbol{C}_n\boldsymbol{\Phi}),
			\end{align}
			where $\boldsymbol{M}_n \triangleq [\boldsymbol{V}_n^H, \boldsymbol{h}_{\mathrm{D},n}]^H \in \mathbb{C}^{(L+1) \times M}$, $t'$ is an auxiliary variable with unit modulus, $\bar{\boldsymbol{\phi}} \triangleq [\boldsymbol{\phi}^H, t']^H \in \mathbb{C}^{(L+1) \times 1}$, $\boldsymbol{\Phi} \triangleq \bar{\boldsymbol{\phi}}\bar{\boldsymbol{\phi}}^H \in \mathbb{C}^{(L+1) \times (L+1)}$, $\boldsymbol{C}_n \triangleq \boldsymbol{M}_n\boldsymbol{w}_{\mathrm{I},n}\boldsymbol{w}_{\mathrm{I},n}^H\boldsymbol{M}_n^H \in \mathbb{C}^{(L+1)\times(L+1)}$. On the other hand, we define $t_{\mathrm{I/P},k}$ as
			\begin{align}
				t_{\mathrm{I/P},k}
				& \triangleq \boldsymbol{h}^H\boldsymbol{W}_{\mathrm{I/P},k}\boldsymbol{h}\nonumber\\
				& = \mathrm{tr}(\boldsymbol{h}\boldsymbol{h}^H\boldsymbol{W}_{\mathrm{I/P},k})\nonumber\\
				& = \mathrm{tr}\left((\boldsymbol{h}_{D}+\boldsymbol{V}^H\boldsymbol{\phi})(\boldsymbol{h}_{D}^H+\boldsymbol{\phi}^H\boldsymbol{V})\boldsymbol{W}_{\mathrm{I/P},k}\right)\nonumber\\
				& = \mathrm{tr}(\boldsymbol{M}^H\boldsymbol{\Phi}\boldsymbol{M}\boldsymbol{W}_{\mathrm{I/P},k})\nonumber\\
				& = \mathrm{tr}(\boldsymbol{M}\boldsymbol{W}_{\mathrm{I/P},k}\boldsymbol{M}^H\boldsymbol{\Phi})\nonumber\\
				& = \mathrm{tr}(\boldsymbol{C}_{\mathrm{I/P},k}\boldsymbol{\Phi})\label{eq:t_k},
			\end{align}
			where $\boldsymbol{V} \triangleq [\boldsymbol{V}_1,\dots,\boldsymbol{V}_N] \in \mathbb{C}^{L \times MN}$, $\boldsymbol{M} \triangleq [\boldsymbol{V}^H, \boldsymbol{h}_{D}]^H \in \mathbb{C}^{(L+1) \times MN}$, $\boldsymbol{C}_{\mathrm{I/P},k} \triangleq \boldsymbol{M}\boldsymbol{W}_{\mathrm{I/P},k}\boldsymbol{M}^H \in \mathbb{C}^{(L+1)\times(L+1)}$. On top of this, \eqref{eq:R} and \eqref{eq:z_expand} reduce respectively to
			\begin{align}
				R(\boldsymbol{\Phi})
				& = \sum_{n=1}^{N}{\log_2\left(1+\frac{(1-\rho)\mathrm{tr}(\boldsymbol{C}_n\boldsymbol{\Phi})}{\sigma_n^2}\right)},\label{eq:R_irs}\\
				z(\boldsymbol{\Phi})
				& = \frac{1}{2}{\beta_2}{\rho}(t_{\mathrm{I},0}+t_{\mathrm{P},0}) + \frac{3}{8}{\beta_4}{\rho^2} \left(2t_{\mathrm{I},0}^2 + \sum_{k=-N+1}^{N-1}{t_{\mathrm{P},k}t_{\mathrm{P},k}^*}\right)\nonumber\\
				& \quad + \frac{3}{2}{\beta_4}{\rho^2}t_{\mathrm{I},0}t_{\mathrm{P},0}.\label{eq:z_irs}
			\end{align}
			To maximize the non-concave expression \eqref{eq:z_irs}, we successively lower bound the second-order terms by their first-order Taylor expansions \cite{Adali2010}. Based on the solution at iteration $i - 1$, the approximations at iteration $i$ are
			\begin{align}
				(t_{\mathrm{I},0}^{(i)})^2
				& \ge 2 t_{\mathrm{I},0}^{(i)}t_{\mathrm{I},0}^{(i-1)} - (t_{\mathrm{I},0}^{(i-1)})^2,\label{eq:taylor_1}\\
				t_{\mathrm{P},k}^{(i)} (t_{\mathrm{P},k}^{(i)})^*
				& \ge 2 \Re\left\{t_{\mathrm{P},k}^{(i)} (t_{\mathrm{P},k}^{(i-1)})^*\right\} - t_{\mathrm{P},k}^{(i-1)} (t_{\mathrm{P},k}^{(i-1)})^*,\label{eq:taylor_2}\\
				t_{\mathrm{I},0}^{(i)} t_{\mathrm{P},0}^{(i)}
				& \ge t_{\mathrm{I},0}^{(i)} t_{\mathrm{P},0}^{(i-1)} + t_{\mathrm{P},0}^{(i)} t_{\mathrm{I},0}^{(i-1)} - t_{\mathrm{I},0}^{(i-1)} t_{\mathrm{P},0}^{(i-1)}.\label{eq:taylor_3}
			\end{align}
			Note that $t_{\mathrm{I/P},0}=\mathrm{tr}(\boldsymbol{C}_{\mathrm{I/P},0}\boldsymbol{\Phi})$ is real-valued because $\boldsymbol{C}_{\mathrm{I/P},0}$ and $\boldsymbol{\Phi}$ are Hermitian matrices. Due to symmetry \cite{Golub2013}, we have
			\begin{equation}\label{eq:coupled_terms}
				\sum_{k=-N+1}^{N-1} \Re\left\{t_{\mathrm{P},k}^{(i)} (t_{\mathrm{P},k}^{(i-1)})^*\right\} = \sum_{k=-N+1}^{N-1} t_{\mathrm{P},k}^{(i)} (t_{\mathrm{P},k}^{(i-1)})^*.
			\end{equation}
			Plugging \eqref{eq:taylor_1}--\eqref{eq:coupled_terms} into \eqref{eq:z_irs}, we obtain the DC approximation $\tilde{z}$ as \eqref{eq:z_irs_approx} and transform problem~\eqref{op:original} to
			\begin{figure*}[!b]
				\hrule
				\begin{align}
					\tilde{z}(\boldsymbol{\Phi}^{(i)})
					& = \frac{1}{2}{\beta_2}{\rho}(t_{\mathrm{I},0}^{(i)}+t_{\mathrm{P},0}^{(i)}) + \frac{3}{8}{\beta_4}{\rho^2} \left(4 t_{\mathrm{I},0}^{(i)}t_{\mathrm{I},0}^{(i-1)} - 2 (t_{\mathrm{I},0}^{(i-1)})^2 + \sum_{k=-N+1}^{N-1}{2 t_{\mathrm{P},k}^{(i)} (t_{\mathrm{P},k}^{(i-1)})^* - t_{\mathrm{P},k}^{(i-1)} (t_{\mathrm{P},k}^{(i-1)})^*}\right)\nonumber\\
					& \quad + \frac{3}{2}{\beta_4}{\rho^2} \left(t_{\mathrm{I},0}^{(i)} t_{\mathrm{P},0}^{(i-1)} + t_{\mathrm{P},0}^{(i)} t_{\mathrm{I},0}^{(i-1)} - t_{\mathrm{I},0}^{(i-1)} t_{\mathrm{P},0}^{(i-1)}\right).\label{eq:z_irs_approx}
				\end{align}
			\end{figure*}
			\begin{maxi!}
				{\scriptstyle{\boldsymbol{\Phi}}}{\tilde{z}(\boldsymbol{\Phi})}{\label{op:irs}}{\label{ob:irs}}
				\addConstraint{R(\boldsymbol{\Phi}) \ge \bar{R}}\label{co:irs_rate}
				\addConstraint{\mathrm{diag}^{-1}(\boldsymbol{\Phi})=\boldsymbol{1}}\label{co:irs_modulus}
				\addConstraint{\boldsymbol{\Phi}\succeq{\boldsymbol{0}}}\label{co:irs_sd}
				\addConstraint{\mathrm{rank}(\boldsymbol{\Phi})=1.\label{co:irs_rank}}
			\end{maxi!}
			We then apply Semi-Definite Relaxation (SDR) to the unit-rank constraint \eqref{co:irs_rank} and formulate a Semi-Definite Programming (SDP) with approximation accuracy no greater than $\pi/4$ \cite{Luo2010b}. In this specific case, we found the solution provided by CVX toolbox \cite{Grant2016} to \eqref{ob:irs}-\eqref{co:irs_sd} is always rank-\num{1}. This conclusion is summarized below.

			\begin{proposition}\label{pr:relaxation}
				Any optimal solution $\boldsymbol{\Phi}^\star$ to the relaxed passive beamforming problem~\eqref{ob:irs}--\eqref{co:irs_sd} is rank-\num{1} such that \eqref{co:irs_rank} is tight and no loss is introduced by SDR.
			\end{proposition}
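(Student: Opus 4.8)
The plan is to treat the relaxed problem~\eqref{ob:irs}--\eqref{co:irs_sd} as a convex Semi-Definite Program and to extract, from its Karush--Kuhn--Tucker (KKT) conditions, a dual certificate whose rank pins down the rank of \emph{every} primal optimizer. First I would observe that, after the Taylor linearization, $\tilde{z}(\boldsymbol{\Phi})$ in~\eqref{eq:z_irs_approx} is an \emph{affine} function of $\boldsymbol{\Phi}$: each $t_{\mathrm{I/P},k}=\mathrm{tr}(\boldsymbol{C}_{\mathrm{I/P},k}\boldsymbol{\Phi})$ is linear and the iteration-$(i-1)$ quantities are constants, so I may write $\tilde{z}(\boldsymbol{\Phi})=\mathrm{tr}(\boldsymbol{A}\boldsymbol{\Phi})+\text{const}$ with $\boldsymbol{A}$ Hermitian. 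The rate function $R(\boldsymbol{\Phi})$ in~\eqref{eq:R_irs} is concave (a sum of logarithms of affine terms), so~\eqref{co:irs_rate} carves out a convex set; together with the affine unit-diagonal constraint~\eqref{co:irs_modulus} and the cone constraint~\eqref{co:irs_sd}, the relaxed problem is convex. A strictly feasible point (e.g.\ any full-rank unit-diagonal $\boldsymbol{\Phi}\succ\boldsymbol{0}$ meeting~\eqref{co:irs_rate} with slack) establishes Slater's condition, so strong duality holds and KKT is necessary and sufficient.

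Next I would form the Lagrangian with multiplier $\mu\ge0$ for~\eqref{co:irs_rate}, a real vector $\boldsymbol{\lambda}$ (collected into $\boldsymbol{\Lambda}\triangleq\mathrm{diag}(\boldsymbol{\lambda})$) for the $L+1$ unit-diagonal equalities, and $\boldsymbol{Y}\succeq\boldsymbol{0}$ for~\eqref{co:irs_sd}. Stationarity in $\boldsymbol{\Phi}$ yields the certificate
\[
\boldsymbol{Y}=\boldsymbol{\Lambda}-\boldsymbol{G},\qquad \boldsymbol{G}\triangleq\boldsymbol{A}+\mu\sum_{n=1}^{N}a_n\boldsymbol{C}_n,
\]
where $a_n>0$ are the positive derivative coefficients of the logarithms evaluated at $\boldsymbol{\Phi}^\star$. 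Complementary slackness for the cone constraint reads $\boldsymbol{Y}\boldsymbol{\Phi}^\star=\boldsymbol{0}$, so the range of $\boldsymbol{\Phi}^\star$ lies in $\ker(\boldsymbol{Y})$ and hence $\mathrm{rank}(\boldsymbol{Y})+\mathrm{rank}(\boldsymbol{\Phi}^\star)\le L+1$. Since $\boldsymbol{\Phi}^\star$ has unit diagonal it is nonzero; it therefore suffices to prove $\mathrm{rank}(\boldsymbol{Y})=L$, i.e.\ that the certificate $\boldsymbol{Y}$ has a one-dimensional kernel.

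The substance of the proof is this rank bound. I would decompose $\boldsymbol{G}=\boldsymbol{G}_{+}+\tfrac{3}{4}\beta_4\rho^2\boldsymbol{B}$, where $\boldsymbol{G}_{+}$ gathers the second-order, rate, and zero-lag fourth-order contributions, each a \emph{nonnegative} multiple of a positive semidefinite matrix ($\boldsymbol{C}_{\mathrm{I},0}=\boldsymbol{M}\boldsymbol{W}_{\mathrm{I},0}\boldsymbol{M}^H\succeq\boldsymbol{0}$, $\boldsymbol{C}_{\mathrm{P},0}\succeq\boldsymbol{0}$, and $\boldsymbol{C}_n\succeq\boldsymbol{0}$), so $\boldsymbol{G}_{+}\succeq\boldsymbol{0}$; and $\boldsymbol{B}\triangleq\sum_{k\neq0}(t_{\mathrm{P},k}^{(i-1)})^{*}\boldsymbol{C}_{\mathrm{P},k}$ collects the cross-lag multisine coupling, which is Hermitian by the conjugate symmetry $\boldsymbol{C}_{\mathrm{P},-k}=\boldsymbol{C}_{\mathrm{P},k}^{H}$ but is in general indefinite. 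I would then argue that $\ker(\boldsymbol{Y})$ with $\boldsymbol{Y}=\boldsymbol{\Lambda}-\boldsymbol{G}$ cannot exceed dimension one: assuming $\boldsymbol{Y}\boldsymbol{u}=\boldsymbol{Y}\boldsymbol{v}=\boldsymbol{0}$ for linearly independent $\boldsymbol{u},\boldsymbol{v}$ forces $\boldsymbol{\Lambda}$ to coincide with $\boldsymbol{G}$ on $\mathrm{span}\{\boldsymbol{u},\boldsymbol{v}\}$, and the diagonal structure of $\boldsymbol{\Lambda}$ together with the positivity of $\boldsymbol{G}_{+}$ is used to derive a contradiction, leaving $\mathrm{span}\{\bar{\boldsymbol{\phi}}^\star\}$ as the only admissible kernel direction.

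I expect the indefinite coupling term $\boldsymbol{B}$ to be the main obstacle. Without it, $\boldsymbol{G}\succeq\boldsymbol{0}$ and the one-dimensional-kernel claim follows by the reasoning standard for single-constraint unit-modulus IRS relaxations. With it, one must verify that the cross-lag coupling, evaluated at the previous feasible iterate, cannot enlarge $\ker(\boldsymbol{Y})$; I would attempt this either by showing that the strictly positive diagonal perturbation $\boldsymbol{\Lambda}$ dominates the indefinite part on the relevant subspace, or, failing a fully deterministic bound, by a genericity argument on the channel realizations guaranteeing that $\boldsymbol{Y}$ almost surely has a simple zero eigenvalue. This deterministic control of $\boldsymbol{B}$ is the crux, and the empirical observation that CVX always returns a rank-$1$ solution is consistent with it.
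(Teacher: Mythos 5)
Your skeleton coincides with the paper's proof: both treat \eqref{ob:irs}--\eqref{co:irs_sd} as a convex SDP (affine SCA objective, concave log-of-affine rate constraint), invoke Slater's condition and strong duality, and read the rank off the complementary-slackness identity for the positive semi-definite multiplier. Your certificate $\boldsymbol{Y}=\boldsymbol{\Lambda}-\boldsymbol{G}$ is the paper's $\boldsymbol{\Upsilon}^\star=\boldsymbol{I}-\boldsymbol{\Delta}^\star$ up to a normalization: the paper first adds $-\mathrm{tr}(\boldsymbol{\Phi})$, which is the constant $-(L+1)$ on the feasible set by \eqref{co:irs_modulus}, to the objective, thereby pulling an explicit identity matrix out of the diagonal multiplier; absorbing $\boldsymbol{I}$ into your $\boldsymbol{\Lambda}$ makes the two certificates identical. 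Your isolation of the cross-lag multisine term $\sum_{k\neq 0}(t_{\mathrm{P},k}^{(i-1)})^{*}\boldsymbol{C}_{\mathrm{P},k}$ as the only indefinite obstruction is accurate, and is in fact more explicit than anything stated in the paper's appendix.

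The gap sits exactly where you place it, but your \emph{primary} route for closing it would fail. The step ``two independent kernel vectors force $\boldsymbol{\Lambda}$ to coincide with $\boldsymbol{G}$ on their span, contradiction from diagonality of $\boldsymbol{\Lambda}$ and positivity of $\boldsymbol{G}_{+}$'' is not a valid deterministic argument: nothing prevents a real diagonal matrix from agreeing with a Hermitian $\boldsymbol{G}$ on a two-dimensional subspace, and no dominance of the indefinite part by $\boldsymbol{\Lambda}$ can be asserted, since $\boldsymbol{\Lambda}$ is itself pinned down by the KKT system rather than at your disposal. A counting argument confirms that convexity bookkeeping alone cannot suffice: for a complex SDP with $L+2$ linear constraints, the Barvinok--Pataki bound only guarantees \emph{some} optimal solution of rank $\mathcal{O}(\sqrt{L})$, so any proof of rank-\num{1} for \emph{every} optimizer must use problem-specific structure or genericity. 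The paper does not supply a deterministic bound either: it closes the argument by citing \cite[Appendix~A]{Xu2020} for $\mathrm{rank}(\boldsymbol{\Upsilon}^\star)\ge L$, and that argument is precisely your \emph{fallback}: if $\lambda_{\max}(\boldsymbol{\Delta}^\star)<1$ then $\boldsymbol{\Upsilon}^\star\succ\boldsymbol{0}$ and $\boldsymbol{\Upsilon}^\star\boldsymbol{\Phi}^\star=\boldsymbol{0}$ would force $\boldsymbol{\Phi}^\star=\boldsymbol{0}$, contradicting \eqref{co:irs_modulus}; hence $\lambda_{\max}(\boldsymbol{\Delta}^\star)=1$, and for channels drawn from continuous distributions this maximal eigenvalue is simple with probability one, making $\ker(\boldsymbol{\Upsilon}^\star)$ one-dimensional and $\mathrm{rank}(\boldsymbol{\Phi}^\star)=1$. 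Promote the genericity argument from fallback to main line, drop the deterministic contradiction sketch, and your proof matches the paper's.
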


			\begin{proof}\label{pf:relaxation}
				Please refer to Appendix~\ref{ap:relaxation}.
			\end{proof}

			In summary, we update $\boldsymbol{\Phi}^{(i)}$ until convergence, extract $\hat{\boldsymbol{\phi}}^\star$ by eigen decomposition, and retrieve the	IRS vector by $\boldsymbol{\phi}^{\star}=e^{j \arg\left([\hat{\boldsymbol{\phi}}^\star]_{(1:L)} \middle/ [\hat{\boldsymbol{\phi}}^\star]_{(L+1)}\right)}$. The passive beamforming design is summarized in the SCA Algorithm~\ref{al:sca}, where the relaxed problem \eqref{ob:irs}--\eqref{co:irs_sd} involves a $(L+1)$-order positive semi-definite matrix variable and $(L+2)$ linear constraints. Given a solution accuracy $\epsilon_{\mathrm{IPM}}$ for the interior-point method, the computational complexity of Algorithm~\ref{al:sca} is $\mathcal{O}\left(I_{\mathrm{SCA}}(L+2)^4 (L+1)^{0.5} \log(\epsilon_{\mathrm{IPM}}^{-1})\right)$, where $I_{\mathrm{SCA}}$ denotes the number of SCA iterations \cite{Luo2010b}.

			\begin{algorithm}[!t]
				\caption{SCA: IRS Phase Shift.}
				\label{al:sca}
				\begin{algorithmic}[1]
					\State \textbf{Input} $\beta_2$, $\beta_4$, $\boldsymbol{h}_{\mathrm{D},n}$, $\boldsymbol{V}_{n}$, $\sigma_n$, $\boldsymbol{w}_{\mathrm{I/P},n}$, $\rho$, $\bar{R}$, $\epsilon$, $\forall n$
					\State Construct $\boldsymbol{V}$, $\boldsymbol{M}$, $\boldsymbol{M}_n$, $\boldsymbol{C}_{n}$, $\boldsymbol{C}_{\mathrm{I/P},k}$, $\forall n,k$
					\State \textbf{Initialize} $i \gets 0$, $\boldsymbol{\Phi}^{(0)}$
					\State Set $t_{\mathrm{I/P},k}^{(0)}$, $\forall k$ by \eqref{eq:t_k}
					\State Compute $z^{(0)}$ by \eqref{eq:z_irs}
					\Repeat
						\State $i \gets i + 1$
						\State Get $\boldsymbol{\Phi}^{(i)}$ by solving \eqref{ob:irs}--\eqref{co:irs_sd}
						\State Update $t_{\mathrm{I/P},k}^{(i)}$, $\forall k$ by \eqref{eq:t_k}
						\State Compute $z^{(i)}$ by \eqref{eq:z_irs}
					\Until $\lvert z^{(i)}-z^{(i-1)} \rvert \le \epsilon$
					\State Set $\boldsymbol{\Phi}^{\star} \gets \boldsymbol{\Phi}^{(i)}$
					\State Get $\hat{\boldsymbol{\phi}}^\star$ by eigen decomposition, $\boldsymbol{\Phi}^{\star}=\hat{\boldsymbol{\phi}}^\star(\hat{\boldsymbol{\phi}}^\star)^H$
					\State Set $\boldsymbol{\phi}^{\star} \gets e^{j \arg\left([\hat{\boldsymbol{\phi}}^\star]_{(1:L)} \middle/ [\hat{\boldsymbol{\phi}}^\star]_{(L+1)}\right)}$
					\State \textbf{Output} $\boldsymbol{\phi}^{\star}$
				\end{algorithmic}
			\end{algorithm}

			\begin{proposition}\label{pr:sca}
				For any feasible initial point with given waveform and splitting ratio, the SCA Algorithm~\ref{al:sca} is guaranteed to converge to local optimal points of the original problem~\eqref{op:original}.
			\end{proposition}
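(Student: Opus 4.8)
The plan is to recognize Algorithm~\ref{al:sca} as an instance of the Minorization--Maximization (MM), or inner-approximation, scheme and to verify the three surrogate properties this framework requires of $\tilde{z}$ relative to $z$: (i) \emph{tightness}, $\tilde{z}(\boldsymbol{\Phi}^{(i-1)}) = z(\boldsymbol{\Phi}^{(i-1)})$; (ii) \emph{global minorization}, $\tilde{z}(\boldsymbol{\Phi}) \le z(\boldsymbol{\Phi})$ for every $\boldsymbol{\Phi}$ feasible for \eqref{co:irs_rate}--\eqref{co:irs_sd}; and (iii) \emph{gradient consistency}, $\nabla\tilde{z}(\boldsymbol{\Phi}^{(i-1)}) = \nabla z(\boldsymbol{\Phi}^{(i-1)})$. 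Two structural facts make this clean. First, the feasible set does not change across iterations: the rate constraint \eqref{co:irs_rate} is concave in $\boldsymbol{\Phi}$ (a sum of logarithms of affine terms) and is never linearized, so only the objective is approximated. Second, since $t_{\mathrm{I/P},k}$ in \eqref{eq:t_k} is affine in $\boldsymbol{\Phi}$, the surrogate $\tilde{z}$ in \eqref{eq:z_irs_approx} is affine, and by Proposition~\ref{pr:relaxation} each subproblem \eqref{ob:irs}--\eqref{co:irs_sd} is a convex SDP whose optimum is rank-one; hence every subproblem is solved to global optimality with no relaxation loss.

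Granting (i) and (ii), monotonic ascent follows from the exact maximization of the surrogate. Since $\boldsymbol{\Phi}^{(i-1)}$ is feasible for subproblem $i$ and $\boldsymbol{\Phi}^{(i)}$ maximizes $\tilde{z}$ over that set, I would chain $z(\boldsymbol{\Phi}^{(i)}) \ge \tilde{z}(\boldsymbol{\Phi}^{(i)}) \ge \tilde{z}(\boldsymbol{\Phi}^{(i-1)}) = z(\boldsymbol{\Phi}^{(i-1)})$, so $\{z(\boldsymbol{\Phi}^{(i)})\}$ is non-decreasing. The feasible region is compact (the unit-diagonal and PSD constraints fix $\mathrm{tr}(\boldsymbol{\Phi}) = L+1$ and hence bound $\boldsymbol{\Phi}$), and $z$ is continuous, so the sequence is bounded above and converges. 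Because subproblem $i$ always admits $\boldsymbol{\Phi}^{(i-1)}$ as a feasible point, feasibility is preserved along the iterations once the initial point is feasible.

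The stationarity claim is then extracted as follows. By compactness, $\{\boldsymbol{\Phi}^{(i)}\}$ has a limit point $\boldsymbol{\Phi}^{\star}$, which is a fixed point of the SCA map. Since surrogate and original objective share value and gradient at the expansion point by (i) and (iii), and the two problems share the identical constraint set \eqref{co:irs_rate}--\eqref{co:irs_sd}, the KKT system of the surrogate at $\boldsymbol{\Phi}^{\star}$ coincides with that of the passive-beamforming restriction of \eqref{op:original}; the standard MM/inner-approximation convergence result then gives that $\boldsymbol{\Phi}^{\star}$ is a KKT (stationary) point, i.e., a local optimum in the stated sense. The eigen-decomposition retrieval of $\boldsymbol{\phi}^{\star}$ preserves this, as Proposition~\ref{pr:relaxation} renders the rank-one extraction exact.

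The crux, and the step I expect to be hardest, is the global minorization (ii). The squared and modulus-squared terms are immediate: \eqref{eq:taylor_1} and \eqref{eq:taylor_2} are first-order expansions of the convex maps $t \mapsto t^2$ and $t \mapsto \lvert t\rvert^2$, whose residuals $(t_{\mathrm{I},0}-t_{\mathrm{I},0}^{(i-1)})^2$ and $\lvert t_{\mathrm{P},k}-t_{\mathrm{P},k}^{(i-1)}\rvert^2$ are nonnegative, so they are bona fide lower bounds. The delicate term is the bilinear coupling $t_{\mathrm{I},0}t_{\mathrm{P},0}$: its first-order expansion \eqref{eq:taylor_3} leaves the residual $(t_{\mathrm{I},0}-t_{\mathrm{I},0}^{(i-1)})(t_{\mathrm{P},0}-t_{\mathrm{P},0}^{(i-1)})$, which is sign-indefinite, so \eqref{eq:taylor_3} by itself does not minorize the product and the combined $z-\tilde{z}$ need not be nonnegative. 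I would repair this with a difference-of-convex reformulation, writing $t_{\mathrm{I},0}t_{\mathrm{P},0} = \tfrac14(t_{\mathrm{I},0}+t_{\mathrm{P},0})^2 - \tfrac14(t_{\mathrm{I},0}-t_{\mathrm{P},0})^2$ and linearizing only the convex squared-sum while retaining the concave $-\tfrac14(t_{\mathrm{I},0}-t_{\mathrm{P},0})^2$ term. A short check shows this modified surrogate reproduces the value $t_{\mathrm{I},0}^{(i-1)}t_{\mathrm{P},0}^{(i-1)}$ and gradient $t_{\mathrm{P},0}^{(i-1)}\nabla t_{\mathrm{I},0}+t_{\mathrm{I},0}^{(i-1)}\nabla t_{\mathrm{P},0}$ at the expansion point, so (i) and (iii) are untouched, while it genuinely lower-bounds $z$; this secures the monotone ascent on which the entire argument rests.
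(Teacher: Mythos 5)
Your proof follows the same skeleton as the paper's Appendix~\ref{ap:sca}: monotone ascent obtained from tightness of the surrogate at the expansion point plus feasibility of the previous iterate, boundedness via the unit-diagonal constraint (the paper's phrasing; your $\mathrm{tr}(\boldsymbol{\Phi})=L+1$ compactness remark is the same observation), stationarity via the inner-approximation framework of \cite{Marks1978a} with value/gradient matching, and Proposition~\ref{pr:relaxation} to make the rank-one retrieval of $\boldsymbol{\phi}^\star$ lossless. Where you genuinely depart is the minorization step, and your suspicion there is well founded: the paper simply asserts $\tilde{z}(\boldsymbol{\Phi})\le z(\boldsymbol{\Phi})$ when invoking the inner-approximation result, but this is false in general because of the bilinear coupling term. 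Writing $a=t_{\mathrm{I},0}-t_{\mathrm{I},0}^{(i-1)}$ and $d_k=t_{\mathrm{P},k}-t_{\mathrm{P},k}^{(i-1)}$ (with $b=d_0$ real), the residual is $z-\tilde{z}=\tfrac{3}{8}\beta_4\rho^2\bigl(2a^2+\sum_k\lvert d_k\rvert^2+4ab\bigr)$, and the quadratic form $2a^2+b^2+4ab$ is indefinite (take $a=1$, $b=-1$ to get $-1$); nonnegativity of $t_{\mathrm{I},0},t_{\mathrm{P},0}$ does not rescue \eqref{eq:taylor_3}, since $(a)(b)\ge 0$ still fails when the increments have opposite signs. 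Your difference-of-convex repair $t_{\mathrm{I},0}t_{\mathrm{P},0}=\tfrac14(t_{\mathrm{I},0}+t_{\mathrm{P},0})^2-\tfrac14(t_{\mathrm{I},0}-t_{\mathrm{P},0})^2$, linearizing only the convex square, restores a true global minorant with matching value and gradient, which is exactly what the MM ascent chain $z(\boldsymbol{\Phi}^{(i)})\ge\tilde{z}(\boldsymbol{\Phi}^{(i)})\ge\tilde{z}(\boldsymbol{\Phi}^{(i-1)})=z(\boldsymbol{\Phi}^{(i-1)})$ needs. In this respect your argument is more careful than the paper's own proof, which leaves this step as an unverified assertion.

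Two caveats keep your write-up from being a drop-in proof of the proposition \emph{as stated}. First, your repaired surrogate differs from \eqref{eq:z_irs_approx}, so strictly you have proved convergence of a (correctly) modified Algorithm~\ref{al:sca}; for the paper's literal surrogate, monotone ascent of $z$ along the iterates is not established by this route (the stopping rule still forces termination, but ascent and stationarity remain unproven), and you should say so explicitly rather than fold the repair in silently. Second, you cannot invoke Proposition~\ref{pr:relaxation} verbatim for the repaired subproblem: the rank-one proof in Appendix~\ref{ap:relaxation} computes the KKT gradient for the \emph{affine} surrogate (the specific form of $\boldsymbol{\Delta}^\star$ in \eqref{eq:delta}), whereas your subproblem objective carries the additional concave term $-\tfrac{3}{8}\beta_4\rho^2(t_{\mathrm{I},0}-t_{\mathrm{P},0})^2$, which adds a $\boldsymbol{\Phi}^\star$-dependent combination of $\boldsymbol{C}_{\mathrm{I},0}$ and $\boldsymbol{C}_{\mathrm{P},0}$ to $\boldsymbol{\Delta}^\star$. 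The subproblem is still a convex SDP and the same style of argument very plausibly goes through, but it must be re-derived, not cited. With those two qualifications made explicit, your proof is sound and in fact patches a genuine gap in the paper's argument.
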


			\begin{proof}\label{pf:sca}
				Please refer to Appendix~\ref{ap:sca}.
			\end{proof}
		\end{subsection}

		\begin{subsection}{Active Beamforming}
			The original waveform and active beamforming problem~\eqref{op:original} is over complex vectors $\boldsymbol{w}_{\mathrm{I/P}}$ of size $MN \times 1$. Next, we decouple the design in spatial and frequency domains, enable independent optimizations correspondingly, and reduce the size of variables from $2MN$ to $2(M+N)$. The weight on subband $n$ is essentially
			\begin{equation}\label{eq:w}
				\boldsymbol{w}_{\mathrm{I/P}, n} = s_{\mathrm{I/P}, n} \boldsymbol{b}_{\mathrm{I/P}, n},
			\end{equation}
			where $s_{\mathrm{I/P},n}$ denotes the amplitude of the modulated/multisine waveform at tone $n$, and $\boldsymbol{b}_{\mathrm{I/P}, n}$ denotes the corresponding information/power precoder. Define $\boldsymbol{s}_{\mathrm{I/P}} \triangleq [s_{\mathrm{I/P},1},\dots,s_{\mathrm{I/P},N}]^T \in \mathbb{R}_+^{N \times 1}$. The MRT precoder at subband $n$ is given by
			\begin{equation}\label{eq:b_n}
				\boldsymbol{b}_{\mathrm{I/P}, n}^\star = \frac{\boldsymbol{h}_n}{\lVert{\boldsymbol{h}_n}\rVert}.
			\end{equation}

			\begin{proposition}\label{pr:mrt}
				For single-user SWIPT, the global optimal information and power precoders coincide at the MRT.
			\end{proposition}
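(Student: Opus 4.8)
The plan is to show that, despite the high-order nonlinear objective \eqref{eq:z_expand} that couples the precoders across subbands, both the energy $z$ and the rate constraint \eqref{co:original_rate} depend on the precoders \emph{only} through the per-subband effective gains $g_{\mathrm{I/P},n} \triangleq \boldsymbol{h}_n^H \boldsymbol{w}_{\mathrm{I/P},n}$, whereas the power budget \eqref{co:original_power} depends on $\lVert \boldsymbol{w}_{\mathrm{I/P},n} \rVert^2$. First I would read this decoupling off \eqref{eq:y_I2}--\eqref{eq:y_P4}: the rate and the information DC terms depend only on $\lvert g_{\mathrm{I},n}\rvert$, the second-order and cross power terms only on $\lvert g_{\mathrm{P},n}\rvert$, and the fourth-order power term only through $t_{\mathrm{P},k}=\sum_n g_{\mathrm{P},n}g_{\mathrm{P},n+k}^*$. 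Thus the precoder \emph{direction} is separated from the induced gains, and it suffices to ask which direction realizes a prescribed gain most cheaply in power.

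The core step is a per-subband, per-stream Cauchy--Schwarz argument. For any fixed gain, $\lvert g_{\mathrm{I/P},n}\rvert = \lvert \boldsymbol{h}_n^H \boldsymbol{w}_{\mathrm{I/P},n}\rvert \le \lVert \boldsymbol{h}_n\rVert\,\lVert \boldsymbol{w}_{\mathrm{I/P},n}\rVert$ shows that the least power realizing $g_{\mathrm{I/P},n}$ is $\lvert g_{\mathrm{I/P},n}\rvert^2/\lVert \boldsymbol{h}_n\rVert^2$, attained (up to a global phase) exactly by the MRT direction $\boldsymbol{h}_n/\lVert \boldsymbol{h}_n\rVert$ of \eqref{eq:b_n}. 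I would then argue by contradiction: given any optimal $(\boldsymbol{w}_{\mathrm{I}},\boldsymbol{w}_{\mathrm{P}})$, replace each $\boldsymbol{w}_{\mathrm{I/P},n}$ by the collinear MRT vector reproducing the same $g_{\mathrm{I/P},n}$. This leaves $z$ and $R$ unchanged but never increases the consumed power. Were some precoder not MRT, the power would strictly drop, and the freed power could be used to scale up all signals, strictly increasing $z$ (every even-order DC term in \eqref{eq:z_expand} is nondecreasing in the gain magnitudes) while keeping \eqref{co:original_rate} satisfied, contradicting optimality. Hence the optimal direction of both the information and the power precoder is MRT, and the two coincide since they share the channel $\boldsymbol{h}_n$.

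It then remains to confirm that the leftover phase freedom in $g_{\mathrm{I/P},n}$ is compatible with the real nonnegative amplitude decomposition $\boldsymbol{w}_{\mathrm{I/P},n}=s_{\mathrm{I/P},n}\boldsymbol{b}_{\mathrm{I/P},n}$. The phases of $g_{\mathrm{I},n}$ are immaterial because rate and all information DC terms depend only on $\lvert g_{\mathrm{I},n}\rvert$. The one phase-sensitive quantity is the fourth-order power DC $\mathbb{A}\{y_{\mathrm{P}}^4(t)\}=\frac{3}{8}\sum_k \lvert t_{\mathrm{P},k}\rvert^2$; the triangle inequality gives $\lvert t_{\mathrm{P},k}\rvert \le \sum_n \lvert g_{\mathrm{P},n}\rvert\,\lvert g_{\mathrm{P},n+k}\rvert$ with equality whenever the phases are aligned (any phase that is affine in $n$, in particular all real and nonnegative), so alignment also maximizes this term. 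Therefore each $g_{\mathrm{I/P},n}$ may be taken real and nonnegative, which is precisely what \eqref{eq:b_n} with $s_{\mathrm{I/P},n}\ge 0$ delivers.

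I expect the main obstacle to be the bookkeeping of the frequency coupling in the fourth-order power term: one must verify that, even though it mixes distinct subbands through $t_{\mathrm{P},k}$, it still depends on the precoders solely via the scalar gains $g_{\mathrm{P},n}$, so the per-subband Cauchy--Schwarz step applies verbatim, and that its phase dependence cannot reward a non-collinear (non-MRT) direction. Once this coupling is isolated into the $g$-domain, the remainder is just Cauchy--Schwarz together with the monotonicity/scaling argument.
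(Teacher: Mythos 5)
Your proof is correct, and its engine is the same as the paper's: per-subband Cauchy--Schwarz alignment with $\boldsymbol{h}_n$, plus the observation that the rate and every DC term are monotone in the resulting scalar gains. Where the paper's Appendix C compresses this into two sentences --- MRT maximizes $\lvert \boldsymbol{h}_n^H \boldsymbol{w}_{\mathrm{I/P},n}\rvert$ for a given per-subband amplitude, hence maximizes \eqref{eq:R} and the DC terms \eqref{eq:y_I2}--\eqref{eq:y_P4} --- you run the dual exchange argument: fix the gains $g_{\mathrm{I/P},n}$, show the MRT direction realizes them with strictly least power, and derive a contradiction by reinvesting the freed power via uniform scaling. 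The genuinely valuable addition on your side is making explicit the one step the paper asserts without detail: that the frequency-coupled term $\mathbb{A}\{y_{\mathrm{P}}^4(t)\}=\tfrac{3}{8}\sum_k \lvert t_{\mathrm{P},k}\rvert^2$ is the only phase-sensitive quantity, and that $\lvert t_{\mathrm{P},k}\rvert \le \sum_n \lvert g_{\mathrm{P},n}\rvert\,\lvert g_{\mathrm{P},n+k}\rvert$ holds with \emph{simultaneous} equality across all $k$ for phases affine in $n$, in particular for the real nonnegative gains delivered by \eqref{eq:b_n} with $s_{\mathrm{P},n}\ge 0$. Without that lemma, the claim that MRT maximizes \eqref{eq:y_P4} is not immediate, since one might imagine trading gain magnitude against phase coherence across subbands; your triangle-inequality step shows magnitude maximization and phase alignment are jointly achievable, which is exactly what closes the argument. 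Your formulation also handles strictness details the paper leaves implicit (precoders with components orthogonal to $\boldsymbol{h}_n$, tightness of the power budget at the optimum, preservation of the rate constraint under scaling). In short: same key lemma, different packaging --- yours is the rigorous expansion of the paper's economical proof, with the fourth-order phase analysis being the substantive detail it supplies.
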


			\begin{proof}\label{pf:mrt}
				Please refer to Appendix~\ref{ap:mrt}.
			\end{proof}
		\end{subsection}

		\begin{subsection}{Waveform and Splitting Ratio}
			Next, we jointly optimize the waveform amplitude $\boldsymbol{s}_{\mathrm{I/P}}$ and the splitting ratio $\rho$ for any given IRS phase shift $\boldsymbol{\phi}$ and active precoder $\boldsymbol{b}_{\mathrm{I/P},n}$, $\forall n$. On top of \eqref{eq:b_n}, the equivalent channel strength at subband $n$ is $\lVert{\boldsymbol{h}_n}\rVert$. Hence, the rate \eqref{eq:R} reduces to
			\begin{equation}\label{eq:R_waveform}
				R(\boldsymbol{s}_{\mathrm{I}},\rho) = \log_2\prod_{n=1}^N\left(1+\frac{(1-\rho)\lVert{\boldsymbol{h}_n}\rVert^2 s_{\mathrm{I},n}^2}{\sigma_n^2}\right),
			\end{equation}
			and the DC \eqref{eq:z_expand} rewrites as \eqref{eq:z_waveform}, so that problem~\eqref{op:original} boils down to
			\begin{figure*}[!b]
				\hrule
				\begin{align}
					z(\boldsymbol{s}_{\mathrm{I}},\boldsymbol{s}_\mathrm{P},\rho)
					& = \frac{1}{2}{\beta_2}{\rho} \sum_{n=1}^N \lVert{\boldsymbol{h}_n}\rVert^2(s_{\mathrm{I},n}^2+s_{\mathrm{P},n}^2) + \frac{3}{8}{\beta_4}{\rho^2} \left( 2\sum_{n_1,n_2} \prod_{j=1}^2 \lVert{\boldsymbol{h}_{n_j}}\rVert^2 s_{\mathrm{I},{n_j}}^2 + \sum_{\substack{{n_1},{n_2},{n_3},{n_4}\\{n_1}+{n_2}={n_3}+{n_4}}} \prod_{j=1}^4 \lVert{\boldsymbol{h}_{n_j}}\rVert s_{\mathrm{P},{n_j}} \right)\nonumber\\
					& \quad + \frac{3}{2}{\beta_4}{\rho^2} \left( \sum_{n_1,n_2} \lVert{\boldsymbol{h}_{n_1}}\rVert^2 \lVert{\boldsymbol{h}_{n_2}}\rVert^2 s_{\mathrm{I},{n_1}}^2 s_{\mathrm{P},{n_2}}^2 \right).\label{eq:z_waveform}
				\end{align}
			\end{figure*}
			\begin{maxi!}
				{\scriptstyle{\boldsymbol{s}_{\mathrm{I}},\boldsymbol{s}_\mathrm{P},\rho}}{z(\boldsymbol{s}_{\mathrm{I}},\boldsymbol{s}_\mathrm{P},\rho)}{\label{op:waveform}}{}
				\addConstraint{R(\boldsymbol{s}_{\mathrm{I}},\rho) \ge \bar{R}}
				\addConstraint{\frac{1}{2}\left(\lVert{\boldsymbol{s}_{\mathrm{I}}}\rVert^2+\lVert{\boldsymbol{s}_\mathrm{P}}\rVert^2\right)\le{P}.}
			\end{maxi!}
			Following \cite{Clerckx2018b}, we introduce auxiliary variables $t'',\bar{\rho}$ and transform problem~\eqref{op:waveform} into a reversed GP
			\begin{mini!}
				{\scriptstyle{\boldsymbol{s}_{\mathrm{I}},\boldsymbol{s}_\mathrm{P},\rho,\bar{\rho},t''}}{\frac{1}{t''}}{\label{op:waveform_rgp}}{}
				\addConstraint{\frac{t''}{z(\boldsymbol{s}_{\mathrm{I}},\boldsymbol{s}_\mathrm{P},\rho)} \le 1}\label{co:waveform_objective}
				\addConstraint{\frac{2^{\bar{R}}}{\prod_{n=1}^N \left(1+{\bar{\rho}\lVert{\boldsymbol{h}_n}\rVert^2 s_{\mathrm{I},n}^2}\big/{\sigma_n^2}\right)} \le 1}\label{co:waveform_rate}
				\addConstraint{\frac{1}{2}\left(\lVert{\boldsymbol{s}_{\mathrm{I}}}\rVert^2+\lVert{\boldsymbol{s}_\mathrm{P}}\rVert^2\right) \le P}\label{co:waveform_power}
				\addConstraint{\rho + \bar{\rho} \le 1.}\label{co:waveform_splitting_ratio}
			\end{mini!}
			It can be concluded that $\bar{\rho}^{\star}=1-\rho^{\star}$ as no power is wasted at the receiver. The denominators of \eqref{co:waveform_rate} and \eqref{co:waveform_objective} consist of posynomials \cite{Boyd2007} that can be decomposed as sums of monomials
			\begin{align}
				1+\frac{\bar{\rho}\lVert{\boldsymbol{h}_n}\rVert^2 s_{\mathrm{I},n}^2}{\sigma_n^2} &\triangleq \sum_{m_{\mathrm{I},n}}g_{m_{\mathrm{I},n}}(s_{\mathrm{I},n},\bar{\rho})\label{eq:g_I},\\
				z(\boldsymbol{s}_{\mathrm{I}},\boldsymbol{s}_\mathrm{P},\rho) &\triangleq \sum_{m_\mathrm{P}}{g_{m_\mathrm{P}}(\boldsymbol{s}_{\mathrm{I}},\boldsymbol{s}_\mathrm{P},\rho)}\label{eq:g_P}.
			\end{align}
			We upper bound \eqref{eq:g_I} and \eqref{eq:g_P} by the Arithmetic Mean-Geometric Mean (AM-GM) inequality \cite{Chiang2005} and transform problem~\eqref{op:waveform_rgp} to
			\begin{mini!}
				{\scriptstyle{\boldsymbol{s}_{\mathrm{I}},\boldsymbol{s}_\mathrm{P},\rho,\bar{\rho},t''}}{\frac{1}{t''}}{\label{op:waveform_gp}}{}
				\addConstraint{{t''}\prod_{m_\mathrm{P}}{\left(\frac{g_{{m_\mathrm{P}}}(\boldsymbol{s}_{\mathrm{I}},\boldsymbol{s}_\mathrm{P},\rho)}{\gamma_{{m_\mathrm{P}}}}\right)^{-\gamma_{{m_\mathrm{P}}}}}\le{1}}
				\addConstraint{2^{\bar{R}}\prod_{n}\prod_{m_{\mathrm{I},n}}\left(\frac{g_{m_{\mathrm{I},n}}(s_{\mathrm{I},n},\bar{\rho})}{\gamma_{m_{\mathrm{I},n}}}\right)^{-\gamma_{m_{\mathrm{I},n}}}\le{1}}
				\addConstraint{\frac{1}{2}\left(\lVert{\boldsymbol{s}_{\mathrm{I}}}\rVert^2+\lVert{\boldsymbol{s}_\mathrm{P}}\rVert^2\right)\le{P}}
				\addConstraint{\rho + \bar{\rho} \le 1,}
			\end{mini!}
			where $\gamma_{m_{\mathrm{I},n}},\gamma_{m_\mathrm{P}} \ge 0$ and $\sum_{m_{\mathrm{I},n}}\gamma_{m_{\mathrm{I},n}}=\sum_{m_\mathrm{P}}\gamma_{m_\mathrm{P}}=1$. The tightness of the AM-GM inequality depends on $\{\gamma_{m_{\mathrm{I},n}},\gamma_{m_\mathrm{P}}\}$, and a feasible choice at iteration $i$ is
			\begin{align}
				\gamma_{m_{\mathrm{I},n}}^{(i)} & = \frac{g_{m_{\mathrm{I},n}}(s_{\mathrm{I},n}^{(i-1)},\bar{\rho}^{(i-1)})}{1+{\bar{\rho}^{(i-1)}\lVert{\boldsymbol{h}_n}\rVert^2 (s_{\mathrm{I},n}^{(i-1)})^2}\big/{\sigma_n^2}}\label{eq:gamma_I},\\
				\gamma_{m_\mathrm{P}}^{(i)} & = \frac{g_{m_\mathrm{P}}(\boldsymbol{s}_{\mathrm{I}}^{(i-1)},\boldsymbol{s}_\mathrm{P}^{(i-1)},\rho^{(i-1)})}{z(\boldsymbol{s}_{\mathrm{I}}^{(i-1)},\boldsymbol{s}_\mathrm{P}^{(i-1)},\rho^{(i-1)})}\label{eq:gamma_P}.
			\end{align}
			With \eqref{eq:gamma_I} and \eqref{eq:gamma_P}, problem~\eqref{op:waveform_gp} can be solved by existing optimization tools such as CVX \cite{Grant2016}. We update $\boldsymbol{s}_{\mathrm{I}}^{(i)},\boldsymbol{s}_\mathrm{P}^{(i)},\rho^{(i)}$ iteratively until convergence. The joint waveform amplitude and splitting ratio design is summarized in the GP Algorithm~\ref{al:gp}, which achieves local optimality at the cost of exponential computational complexity \cite{Chiang2005}.

			\begin{algorithm}[!t]
				\caption{GP: Waveform Amplitude and Splitting Ratio.}
				\label{al:gp}
				\begin{algorithmic}[1]
					\State \textbf{Input} $\beta_2$, $\beta_4$, $\boldsymbol{h}_n$, $P$, $\sigma_n$, $\bar{R}$, $\epsilon$, $\forall n$
					\State \textbf{Initialize} $i \gets 0$, $\boldsymbol{s}_{\mathrm{I/P}}^{(0)}$, $\rho^{(0)}$
					\State Compute $R^{(0)}$, $z^{(0)}$ by \eqref{eq:R_waveform}, \eqref{eq:z_waveform}
					\State Set $g_{m_{\mathrm{I},n}}^{(0)}$, $g_{m_\mathrm{P}}^{(0)}$, $\forall n$ by \eqref{eq:g_I}, \eqref{eq:g_P}
					\Repeat
						\State $i \gets i + 1$
						\State Update $\gamma_{m_{\mathrm{I},n}}^{(i)}$, $\gamma_{m_\mathrm{P}}^{(i)}$, $\forall n$ by \eqref{eq:gamma_I}, \eqref{eq:gamma_P}
						\State Get $\boldsymbol{s}_{\mathrm{I/P}}^{(i)}$, $\rho^{(i)}$ by solving problem~\eqref{op:waveform_gp}
						\State Compute $R^{(i)}$, $z^{(i)}$ by \eqref{eq:R_waveform}, \eqref{eq:z_waveform}
						\State Update $g_{m_{\mathrm{I},n}}^{(i)}$, $g_{m_\mathrm{P}}^{(i)}$, $\forall n$ by \eqref{eq:g_I}, \eqref{eq:g_P}
					\Until $\lvert z^{(i)} - z^{(i-1)} \rvert \le \epsilon$
					\State Set $\boldsymbol{s}_{\mathrm{I/P}}^{\star} \gets \boldsymbol{s}_{\mathrm{I/P}}^{(i)}$, $\rho^{\star} \gets \rho^{(i)}$
					\State \textbf{Output} $\boldsymbol{s}_{\mathrm{I}}^{\star}$, $\boldsymbol{s}_{\mathrm{P}}^{\star}$, $\rho^{\star}$
				\end{algorithmic}
			\end{algorithm}

			\begin{proposition}\label{pr:gp}
				For any feasible initial point, the GP Algorithm~\ref{al:gp} is guaranteed to converge to local optimal points of the waveform amplitude and splitting ratio design problem \eqref{op:waveform}.
			\end{proposition}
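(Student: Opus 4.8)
The plan is to recognise Algorithm~\ref{al:gp} as an instance of the single-condensation (inner-approximation) method for reversed/complementary GP and to verify the three surrogate properties that are known to guarantee convergence to a stationary point of the original problem~\eqref{op:waveform}. Throughout, I collect the primal variables into $\boldsymbol{x} \triangleq (\boldsymbol{s}_{\mathrm{I}}, \boldsymbol{s}_{\mathrm{P}}, \rho, \bar{\rho})$, keep $t''$ as the auxiliary epigraph variable, and write each denominator posynomial in \eqref{co:waveform_objective} and \eqref{co:waveform_rate} as $f(\boldsymbol{x}) = \sum_m g_m(\boldsymbol{x})$ via the monomial decompositions \eqref{eq:g_I} and \eqref{eq:g_P}.

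First I would establish that the AM--GM monomial surrogate $\hat{f}^{(i)}(\boldsymbol{x}) \triangleq \prod_m (g_m(\boldsymbol{x}) / \gamma_m^{(i)})^{\gamma_m^{(i)}}$, built with the weights \eqref{eq:gamma_I}--\eqref{eq:gamma_P}, satisfies three properties: (i) a global lower bound $\hat{f}^{(i)}(\boldsymbol{x}) \le f(\boldsymbol{x})$ on the feasible set, which is exactly the AM--GM inequality; (ii) value consistency $\hat{f}^{(i)}(\boldsymbol{x}^{(i-1)}) = f(\boldsymbol{x}^{(i-1)})$, which follows because the rule $\gamma_m^{(i)} = g_m(\boldsymbol{x}^{(i-1)}) / f(\boldsymbol{x}^{(i-1)})$ gives $\prod_m (g_m(\boldsymbol{x}^{(i-1)}) / \gamma_m^{(i)})^{\gamma_m^{(i)}} = f(\boldsymbol{x}^{(i-1)})^{\sum_m \gamma_m^{(i)}} = f(\boldsymbol{x}^{(i-1)})$ since $\sum_m \gamma_m^{(i)} = 1$; and (iii) gradient consistency $\nabla \hat{f}^{(i)}(\boldsymbol{x}^{(i-1)}) = \nabla f(\boldsymbol{x}^{(i-1)})$, i.e. the surrogate is a supporting monomial that touches $f$ tangentially from below at $\boldsymbol{x}^{(i-1)}$. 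Properties (i) and (ii) are immediate; (iii) I would verify by differentiating $\log \hat{f}^{(i)}$ at $\boldsymbol{x}^{(i-1)}$ and matching it term by term with $\nabla \log f$.

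Next I would use these properties to prove monotonic convergence of the objective. Because every denominator is replaced by its monomial lower bound, problem~\eqref{op:waveform_gp} is an inner approximation (restriction) of \eqref{op:waveform_rgp}: any surrogate-feasible point is feasible for \eqref{op:waveform}, and by property (ii) the previous iterate $\boldsymbol{x}^{(i-1)}$ remains feasible for the surrogate at iteration $i$. Hence the surrogate optimum cannot decrease. Writing $\hat{z}^{(i)}$ for the monomial lower bound of $z$ used in \eqref{co:waveform_objective} and using that the $t''$-constraint is active at the surrogate optimum, I obtain $z(\boldsymbol{x}^{(i)}) \ge \hat{z}^{(i)}(\boldsymbol{x}^{(i)}) \ge \hat{z}^{(i)}(\boldsymbol{x}^{(i-1)}) = z(\boldsymbol{x}^{(i-1)})$, so $\{z^{(i)}\}$ is non-decreasing. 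The power constraint~\eqref{co:waveform_power} bounds $\lVert \boldsymbol{s}_{\mathrm{I}} \rVert$ and $\lVert \boldsymbol{s}_{\mathrm{P}} \rVert$, and $\rho \in [0,1]$, so $z$ is bounded above on the feasible set and the monotone sequence converges.

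Finally, the gradient-consistency property (iii) lets me upgrade objective convergence to stationarity: at any limit point $\boldsymbol{x}^{\star}$ of $\{\boldsymbol{x}^{(i)}\}$, the surrogate and the true constraint/objective functions share both value and first-order behaviour, so the KKT system of the surrogate GP~\eqref{op:waveform_gp} coincides with that of the original problem~\eqref{op:waveform}. Invoking the standard convergence result for the condensation method \cite{Chiang2005,Boyd2007} then shows $\boldsymbol{x}^{\star}$ is a KKT (local-optimal) point of \eqref{op:waveform}, paralleling Proposition~\ref{pr:sca}. The main obstacle I anticipate is the rigorous proof of property (iii): whereas value-matching and the global lower bound follow directly from AM--GM, the first-order tangency that turns a merely monotone scheme into one converging to genuine stationary points hinges on the specific weight rule \eqref{eq:gamma_I}--\eqref{eq:gamma_P} and requires the careful differentiation argument outlined above.
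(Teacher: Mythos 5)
Your proposal is correct and follows exactly the argument that the paper relies on: the paper's ``proof'' is only a pointer to \cite{Clerckx2016a,Clerckx2018b}, and those references establish convergence via precisely this single-condensation scheme, verifying that the AM--GM monomial surrogate with weights \eqref{eq:gamma_I}--\eqref{eq:gamma_P} satisfies the lower-bound, value-consistency, and gradient-consistency conditions of the Marks--Wright inner-approximation framework, which yields monotone non-decreasing objective values and KKT points of \eqref{op:waveform} at limit points. Your reconstruction is in fact more self-contained than the paper's treatment, and your flagged concern about property (iii) is resolved by the computation you outline, since $\nabla \log \hat{f}^{(i)}(\boldsymbol{x}^{(i-1)}) = \sum_m \gamma_m^{(i)} \nabla g_m(\boldsymbol{x}^{(i-1)})/g_m(\boldsymbol{x}^{(i-1)}) = \nabla f(\boldsymbol{x}^{(i-1)})/f(\boldsymbol{x}^{(i-1)})$ combined with value matching gives $\nabla \hat{f}^{(i)}(\boldsymbol{x}^{(i-1)}) = \nabla f(\boldsymbol{x}^{(i-1)})$.
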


			\begin{proof}\label{pf:gp}
				Please refer to \cite{Clerckx2016a,Clerckx2018b}.
			\end{proof}
		\end{subsection}

		\begin{subsection}{Low-Complexity Adaptive Design}
			To facilitate practical SWIPT implementation, we propose two closed-form adaptive waveform amplitude schemes by combining WF and SMF in time and power domains, respectively. For WIT, the optimal WF strategy assigns the amplitude of modulated tone $n$ by
			\begin{equation}\label{eq:wf}
				s_{\mathrm{I}, n} = \sqrt{2\left(\lambda - \frac{\sigma_n^2}{P \lVert{\boldsymbol{h}_n}\rVert^2}\right)^+},
			\end{equation}
			where $\lambda$ is chosen to satisfy the power constraint $\lVert{\boldsymbol{s}_I}\rVert^2 / 2 \le P$. The closed-form solution can be obtained by iterative power allocation \cite{Tse2005}, and the details are omitted here. On the other hand, SMF was proposed in \cite{Clerckx2017} as a suboptimal WPT resource allocation scheme that assigns the amplitude of sinewave $n$ by
			\begin{equation}\label{eq:smf}
				s_{\mathrm{P}, n} = \sqrt{\frac{2 P}{\sum_{n=1}^N \lVert{\boldsymbol{h}_n \rVert^{2 \alpha}}}}\lVert{\boldsymbol{h}_n}\rVert^\alpha,
			\end{equation}
			where the scaling ratio $\alpha \ge 1$ is predetermined to exploit the rectifier nonlinearity and frequency selectivity. When the receiver works in TS mode, there is no superposition in the suboptimal waveform design (modulated waveform with amplitude \eqref{eq:wf} is used in the data session while multisine waveform with amplitude \eqref{eq:smf} is used in the energy session). When the receiver works in PS mode, we jointly design the combining ratio $\delta$ with the splitting ratio $\rho$, and assign the superposed waveform amplitudes as
			\begin{align}
				s_{\mathrm{I}, n} &= \sqrt{2(1 - \delta)\left(\lambda - \frac{\sigma_n^2}{P \lVert{\boldsymbol{h}_n}\rVert^2}\right)^+}, \label{eq:s_i}\\
				s_{\mathrm{P}, n} &= \sqrt{\frac{2 \delta P}{\sum_{n=1}^N \lVert{\boldsymbol{h}_n \rVert^{2 \alpha}}}}\lVert{\boldsymbol{h}_n}\rVert^\alpha, \label{eq:s_p}
			\end{align}
			where the $\delta$ determines the power ratio of multisine waveform at the transmitter, and $\rho$ determines the power ratio of the energy harvester at the receiver.\footnote{We notice that $\delta^{\star}=\rho^{\star}=0$ at the WIT point and $\delta^{\star}=\rho^{\star}=1$ at the WPT point when $N$ is relatively large. Intuitively, $\delta^{\star}$ and $\rho^{\star}$ should be positively correlated for efficient SWIPT design.}

			Besides, minor modifications are required for passive beamforming to accommodate the low-complexity waveform schemes. Specifically, the rate constraint \eqref{co:irs_rate} should be dropped as the achievable rate is controlled by $\eta$ or $\{\delta,\rho\}$. To achieve the WIT point ($\rho=0$), the rate \eqref{eq:R_irs} should be maximized, the current expression \eqref{eq:z_irs_approx} is not needed and no SCA is involved. The Modified-SCA (M-SCA) Algorithm~\ref{al:m_sca} summarizes the modified passive beamforming design when the receiver works in PS mode. Similarly, no loss is introduced by SDR and local optimality is guaranteed. The proofs are omitted here. Since each SDP involves $(L+1)$ linear constraints, the computational complexity of Algorithm~\ref{al:m_sca} is $\mathcal{O}\left(I_{\mathrm{M-SCA}}(L+1)^{4.5} \log(\epsilon_{\mathrm{IPM}}^{-1})\right)$, where $I_{\mathrm{M-SCA}}$ denotes the number of M-SCA iterations \cite{Luo2010b}. Note that no SCA is involved at the WIT point where $I_{\mathrm{M-SCA}}=1$.

			\begin{algorithm}[!t]
				\caption{M-SCA: IRS Phase Shift.}
				\label{al:m_sca}
				\begin{algorithmic}[1]
					\State \textbf{Input} $\beta_2$, $\beta_4$, $\boldsymbol{h}_{\mathrm{D},n}$, $\boldsymbol{V}_{n}$, $\sigma_n$, $\boldsymbol{w}_{\mathrm{I/P},n}$, $\rho$, $\epsilon$, $\forall n$
					\State Construct $\boldsymbol{V}$, $\boldsymbol{M}$, $\boldsymbol{M}_n$, $\boldsymbol{C}_{n}$, $\boldsymbol{C}_{\mathrm{I/P},k}$, $\forall n,k$
					\State \textbf{Initialize} $i \gets 0$, $\boldsymbol{\Phi}^{(0)}$
					\If{$\rho=0$}
						\State Get $\boldsymbol{\Phi}^{\star}$ by maximizing \eqref{eq:R_irs} s.t. \eqref{co:irs_modulus}, \eqref{co:irs_sd}
					\Else
						\State Set $t_{\mathrm{I/P},k}^{(0)}$, $\forall k$ by \eqref{eq:t_k}
						\State Compute $z^{(0)}$ by \eqref{eq:z_irs}
						\Repeat
							\State $i \gets i + 1$
								\State Get $\boldsymbol{\Phi}^{(i)}$ by maximizing \eqref{eq:z_irs_approx} s.t. \eqref{co:irs_modulus}, \eqref{co:irs_sd}
								\State Update $t_{\mathrm{I/P},k}^{(i)}$, $\forall k$ by \eqref{eq:t_k}
								\State Compute $z^{(i)}$ by \eqref{eq:z_irs}
						\Until $\lvert z^{(i)}-z^{(i-1)} \rvert \le \epsilon$
						\State Set $\boldsymbol{\Phi}^{\star} \gets \boldsymbol{\Phi}^{(i)}$
					\EndIf
					\State Get $\hat{\boldsymbol{\phi}}^\star$ by eigen decomposition, $\boldsymbol{\Phi}^{\star}=\hat{\boldsymbol{\phi}}^\star(\hat{\boldsymbol{\phi}}^\star)^H$
					\State Set $\boldsymbol{\phi}^{\star} \gets e^{j \arg\left([\hat{\boldsymbol{\phi}}^\star]_{(1:L)} \middle/ [\hat{\boldsymbol{\phi}}^\star]_{(L+1)}\right)}$
					\State \textbf{Output} $\boldsymbol{\phi}^{\star}$
				\end{algorithmic}
			\end{algorithm}
		\end{subsection}

		\begin{subsection}{Block Coordinate Descent}
			Based on the direct and cascaded CSIT, we iteratively update the passive beamforming $\boldsymbol{\phi}$ by Algorithm~\ref{al:sca}, the active precoder $\boldsymbol{b}_{\mathrm{I/P},n}$, $\forall n$ by equation \eqref{eq:b_n}, and the waveform amplitude $\boldsymbol{s}_{\mathrm{I/P}}$ and splitting ratio $\rho$ by Algorithm~\ref{al:gp}, until convergence. The steps are summarized in the BCD Algorithm~\ref{al:bcd}, whose computational complexity is exponential as inherited from Algorithm~\ref{al:gp}. It is guaranteed to converge, but may end up with a suboptimal solution because variables are coupled in constraint~\eqref{co:original_rate} \cite{Grippo2000}.

			\begin{algorithm}[!t]
				\caption{BCD: Waveform, Beamforming and Splitting Ratio.}
				\label{al:bcd}
				\begin{algorithmic}[1]
					\State \textbf{Input} $\beta_2$, $\beta_4$, $\boldsymbol{h}_{\mathrm{D},n}$, $\boldsymbol{V}_{n}$, $P$, $\sigma_n$, $\bar{R}$, $\epsilon$, $\forall n$
					\State \textbf{Initialize} $i \gets 0$, $\boldsymbol{\phi}^{(0)}$, $\boldsymbol{b}_{\mathrm{I/P},n}^{(0)}$, $\boldsymbol{s}_{\mathrm{I/P}}^{(0)}$, $\rho^{(0)}$, $\forall n$
					\State Set $\boldsymbol{w}_{\mathrm{I/P},n}^{(0)}$, $\forall n$ by \eqref{eq:w}
					\State Compute $z^{(0)}$ by \eqref{eq:z_waveform}
					\Repeat
						\State $i \gets i + 1$
						\State Get $\boldsymbol{\phi}^{(i)}$ based on $\boldsymbol{w}_{\mathrm{I/P}}^{(i-1)}$, $\rho^{(i-1)}$ by Algorithm~\ref{al:sca}
						\State Update $\boldsymbol{h}_n^{(i)}$, $\boldsymbol{b}_n^{(i)}$, $\forall n$ by \eqref{eq:h_n}, \eqref{eq:b_n}
						\State Get $\boldsymbol{s}_{\mathrm{I/P}}^{(i)}$, $\rho^{(i)}$ by Algorithm~\ref{al:gp}
						\State Update $\boldsymbol{w}_{\mathrm{I/P},n}^{(i)}$, $\forall n$ by \eqref{eq:w}
						\State Compute $z^{(i)}$ by \eqref{eq:z_waveform}
					\Until $\lvert z^{(i)} - z^{(i-1)} \rvert \le \epsilon$
					\State Set $\boldsymbol{\phi}^{\star} \gets \boldsymbol{\phi}^{(i)}$, $\boldsymbol{w}_{\mathrm{I/P}}^{\star} \gets \boldsymbol{w}_{\mathrm{I/P}}^{(i)}$, $\rho^{\star} \gets \rho^{(i)}$
					\State \textbf{Output} $\boldsymbol{\phi}^{\star}$, $\boldsymbol{w}_{\mathrm{I}}^{\star}$, $\boldsymbol{w}_{\mathrm{P}}^{\star}$, $\rho^{\star}$
				\end{algorithmic}
			\end{algorithm}

			For the low-complexity design under PS mode, we obtain the phase shift by Algorithm~\ref{al:m_sca}, the active precoder $\boldsymbol{b}_{\mathrm{I/P},n}$, $\forall n$ by equation \eqref{eq:b_n}, and the waveform amplitude by \eqref{eq:s_i} and \eqref{eq:s_p}. To achieve the WIT point ($\rho=0$), the rate \eqref{eq:R_irs} should be maximized to obtain the maximum capacity $C_{\max}$.\footnote{Recall in Remark~\ref{re:subband_tradeoff} that different subchannel designs lead to different capacities.} Note that the BCD algorithm obtains the R-E region by varying the rate constraint from \num{0} to $C_{\max}$, while the achievable R-E region of the LC-BCD algorithm can be obtained by performing a two-dimensional search over $(\delta, \rho)$ from $(0, 0)$ to $(1, 1)$. The steps are summarized in the Low Complexity-BCD (LC-BCD) Algorithm~\ref{al:lc_bcd}. The computational complexity of Algorithm~\ref{al:lc_bcd} is $\mathcal{O}\left(I_{\mathrm{LC-BCD}}I_{\mathrm{M-SCA}}(L+1)^{4.5} \log(\epsilon_{\mathrm{IPM}}^{-1})\right)$, where $I_{\mathrm{LC-BCD}}$ denotes the number of LC-BCD iterations \cite{Luo2010b}.

			\begin{algorithm}[!t]
				\caption{LC-BCD: Waveform and Beamforming.}
				\label{al:lc_bcd}
				\begin{algorithmic}[1]
					\State \textbf{Input} $\beta_2$, $\beta_4$, $\boldsymbol{h}_{\mathrm{D},n}$, $\boldsymbol{V}_{n}$, $P$, $\sigma_n$, $\delta$, $\rho$, $\epsilon$, $\forall n$
					\State \textbf{Initialize} $i \gets 0$, $\boldsymbol{\phi}^{(0)}$, $\boldsymbol{b}_{\mathrm{I/P},n}^{(0)}$, $\boldsymbol{s}_{\mathrm{I/P}}^{(0)}$, $\forall n$
					\State Set $\boldsymbol{w}_{\mathrm{I/P},n}^{(0)}$, $\forall n$ by \eqref{eq:w}
					\State Compute $R^{(0)}$, $z^{(0)}$ by \eqref{eq:R_waveform}, \eqref{eq:z_waveform}
					\Repeat
						\State $i \gets i + 1$
						\State Get $\boldsymbol{\phi}^{(i)}$ based on $\boldsymbol{w}_{\mathrm{I/P}}^{(i-1)}$ by Algorithm~\ref{al:m_sca}
						\State Update $\boldsymbol{h}_n^{(i)}$, $\boldsymbol{b}_n^{(i)}$, $\forall n$ by \eqref{eq:h_n}, \eqref{eq:b_n}
						\State Update $\boldsymbol{s}_{\mathrm{I}}^{(i)}$, $\boldsymbol{s}_{\mathrm{P}}^{(i)}$ by \eqref{eq:s_i}, \eqref{eq:s_p}
						\State Update $\boldsymbol{w}_{\mathrm{I/P},n}^{(i)}$, $\forall n$ by \eqref{eq:w}
						\State Compute $R^{(i)}$, $z^{(i)}$ by \eqref{eq:R_waveform}, \eqref{eq:z_waveform}
						\If{$\rho=0$}
							\State $\Delta \gets R^{(i)} - R^{(i-1)}$
						\Else
							\State $\Delta \gets z^{(i)} - z^{(i-1)}$
						\EndIf
					\Until $\lvert \Delta \rvert \le \epsilon$
					\State Set $\boldsymbol{\phi}^{\star} \gets \boldsymbol{\phi}^{(i)}$, $\boldsymbol{w}_{\mathrm{I/P}}^{\star} \gets \boldsymbol{w}_{\mathrm{I/P}}^{(i)}$
					\State \textbf{Output} $\boldsymbol{\phi}^{\star}$, $\boldsymbol{w}_{\mathrm{I}}^{\star}$, $\boldsymbol{w}_{\mathrm{P}}^{\star}$
				\end{algorithmic}
			\end{algorithm}
		\end{subsection}
	\end{section}

	\begin{section}{Performance Evaluations}\label{se:performance_evaluation}
		\begin{figure}[!t]
			\centering
			\def\svgwidth{0.9\columnwidth}
			\import{assets/}{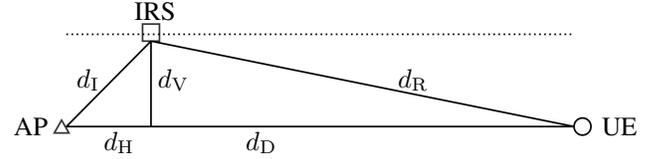}
			\caption{System layout in simulation.}
			\label{fi:layout}
		\end{figure}

		To evaluate the proposed IRS-aided SWIPT system, we consider the layout in Fig.~\ref{fi:layout} where the IRS moves along a line parallel to the AP-UE path. Let $d_{\mathrm{H}}$, $d_{\mathrm{V}}$ be the horizontal and vertical distances from the AP to the IRS, and denote respectively $d_{\mathrm{D}}$, $d_{\mathrm{I}}=\sqrt{d_{\mathrm{H}}^2+d_{\mathrm{V}}^2}$, $d_{\mathrm{R}}=\sqrt{(d_{\mathrm{D}}-d_{\mathrm{H}})^2+d_{\mathrm{V}}^2}$ as the distance of direct, incident and reflected links. $d_{\mathrm{D}}=\SI{12}{\meter}$ and $d_{\mathrm{H}}=d_{\mathrm{V}}=\SI{2}{\meter}$ are chosen as reference. The path loss of direct, incident and reflected links are denoted by $\Lambda_{\mathrm{D}}$, $\Lambda_{\mathrm{I}}$ and $\Lambda_{\mathrm{R}}$, respectively. We consider a large open space Wi-Fi-like environment at center frequency \SI{2.4}{\GHz} where the channel follows IEEE TGn channel model D \cite{Erceg2004}. Specifically, the path loss exponent is \num{2} (i.e., free-space model) up to \SI{10}{\meter}, and \num{3.5} onwards to further penalize the channels with large distance. All fadings are modeled as Non-LoS (NLoS) with tap delays and powers specified in model D, and the tap gains are modeled as i.i.d. CSCG variables. Rectenna parameters are set to $k_2=0.0034$, $k_4=0.3829$, $R_{\mathrm{A}}=\SI{50}{\ohm}$ \cite{Clerckx2016a} such that $\beta_2=0.17$ and $\beta_4=957.25$. We also choose the average Effective Isotropic Radiated Power (EIRP) as $P=\SI{40}{\dBm}$, the receive antenna gain as \SI{3}{\dBi}, the scaling ratio as $\alpha=2$, and the tolerance as $\epsilon=10^{-8}$. To further reduce the complexity, we assume $\delta=\rho$ for simplicity and perform a one-dimensional search from \num{0} to \num{1} to obtain a inner R-E bound for the LC-BCD algorithm. Each R-E point is averaged over \num{200} channel realizations, and the $x$-axis is normalized to per-subband rate $R/N$.

		\begin{figure}[!t]
			\centering
			\resizebox{0.8\columnwidth}{!}{
				\includegraphics{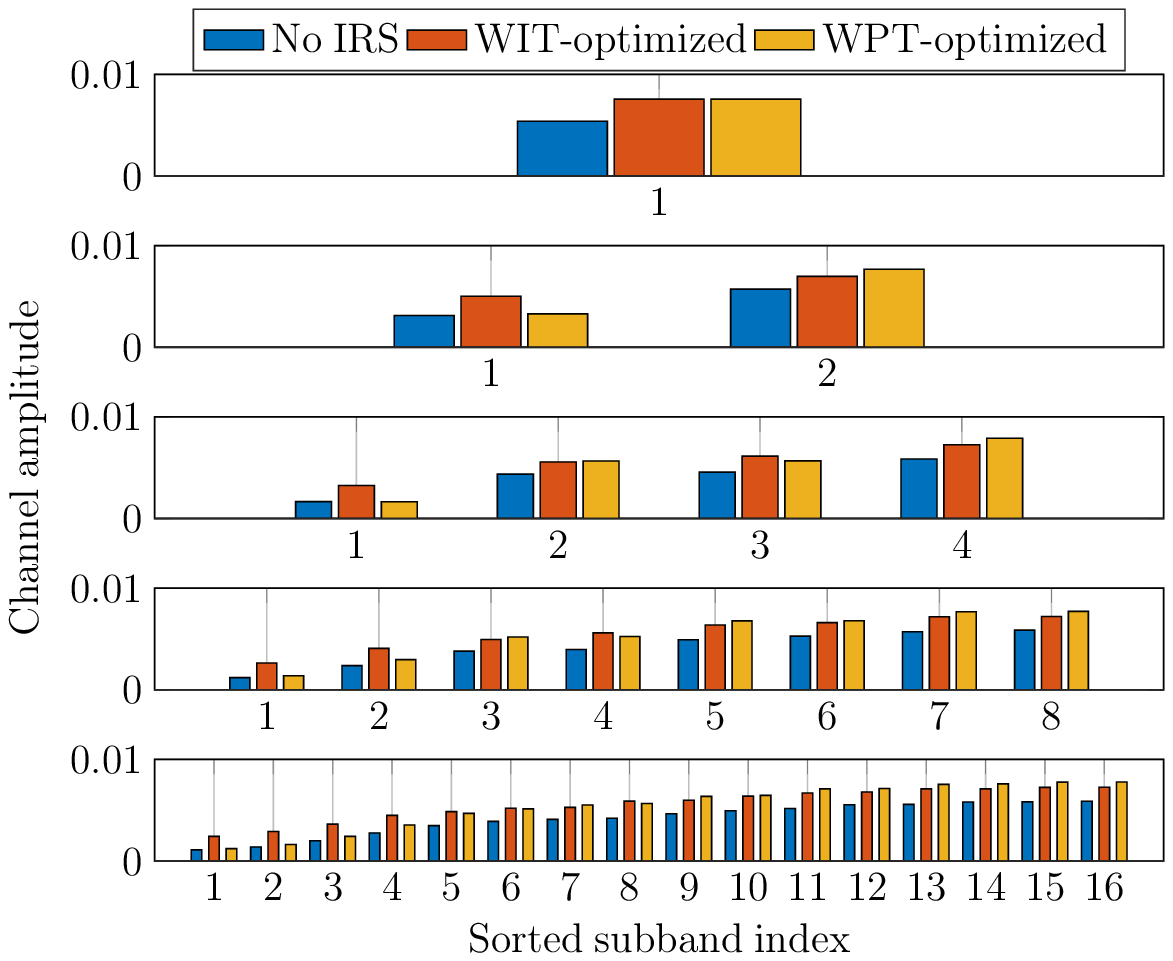}
			}
			\caption{Sorted equivalent subchannel amplitude with and without IRS versus $N$ for $M=1$, $L=100$, $\sigma_n^2=\SI{-40}{\dBm}$, $B=\SI{10}{\MHz}$ and $d_{\mathrm{H}}=d_{\mathrm{V}}=\SI{2}{\meter}$.}
			\label{fi:channel_amplitude}
		\end{figure}

		Fig.~\ref{fi:channel_amplitude} reveals how IRS influences the sorted equivalent subchannel amplitude for one channel realization. Due to the flexible subchannel design enabled by passive beamforming, the optimal amplitude distribution for WIT and WPT are dissimilar. Under the specified configuration, the WPT-optimized IRS aligns the strong subbands to exploit the rectifier nonlinearity. On the other hand, the WIT-optimized IRS provides a fair gain over all subchannels when $L$ is sufficiently large. This is reminiscent of the WF scheme at high SNR, but is realized by channel alignment instead of resource allocation. Nevertheless, the amplitude of modulated and multisine waveforms remains approximately unchanged when adding an IRS (the plots are not attached). In other words, IRS has a subtle impact on the waveform design.

		\begin{figure}[!t]
			\centering
			\subfloat[R-E region\label{fi:re_subband}]{
				\resizebox{0.45\columnwidth}{!}{
					\includegraphics{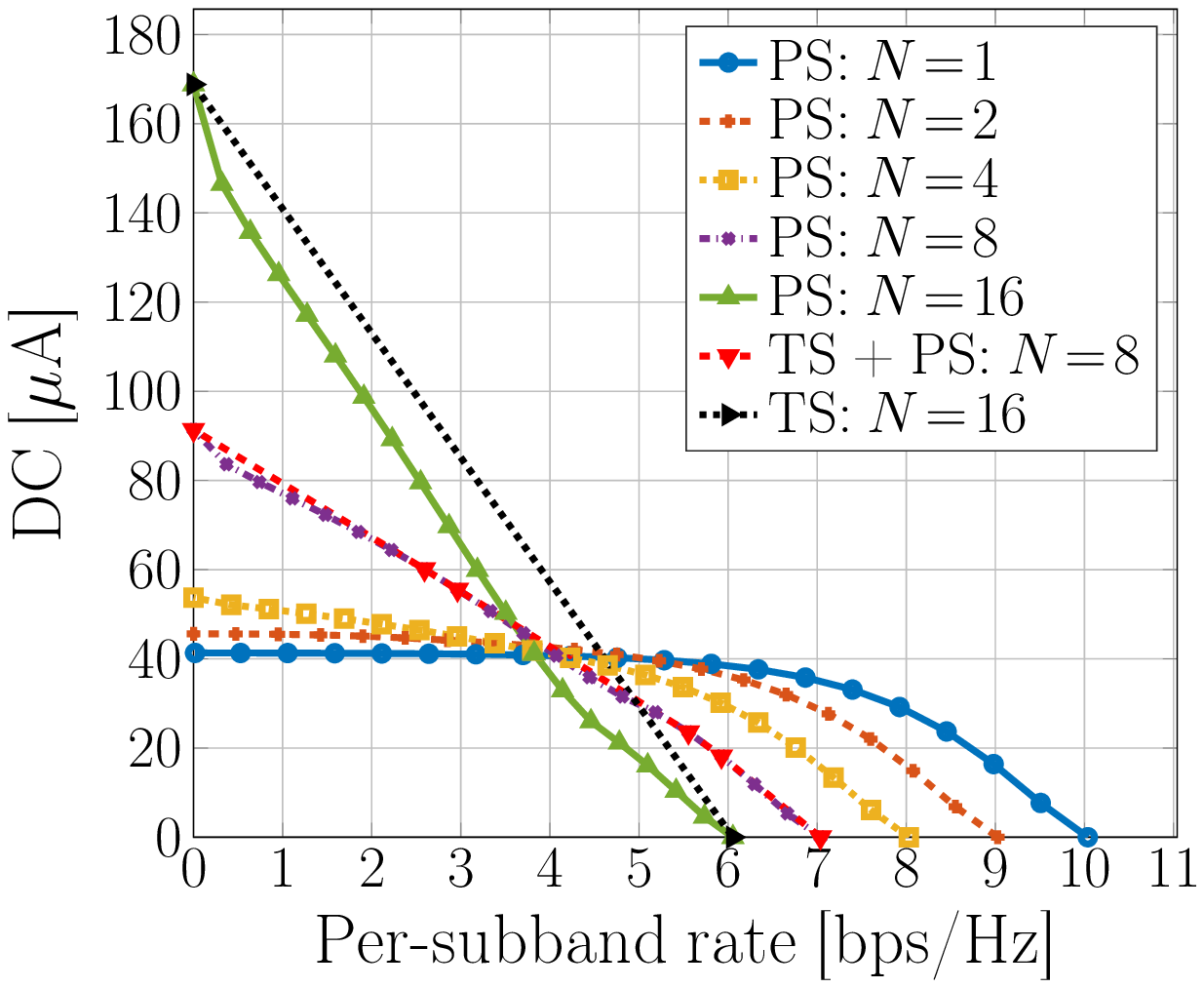}
				}
			}
			\subfloat[WPT waveform amplitude\label{fi:waveform_subband}]{
				\resizebox{0.45\columnwidth}{!}{
					\includegraphics{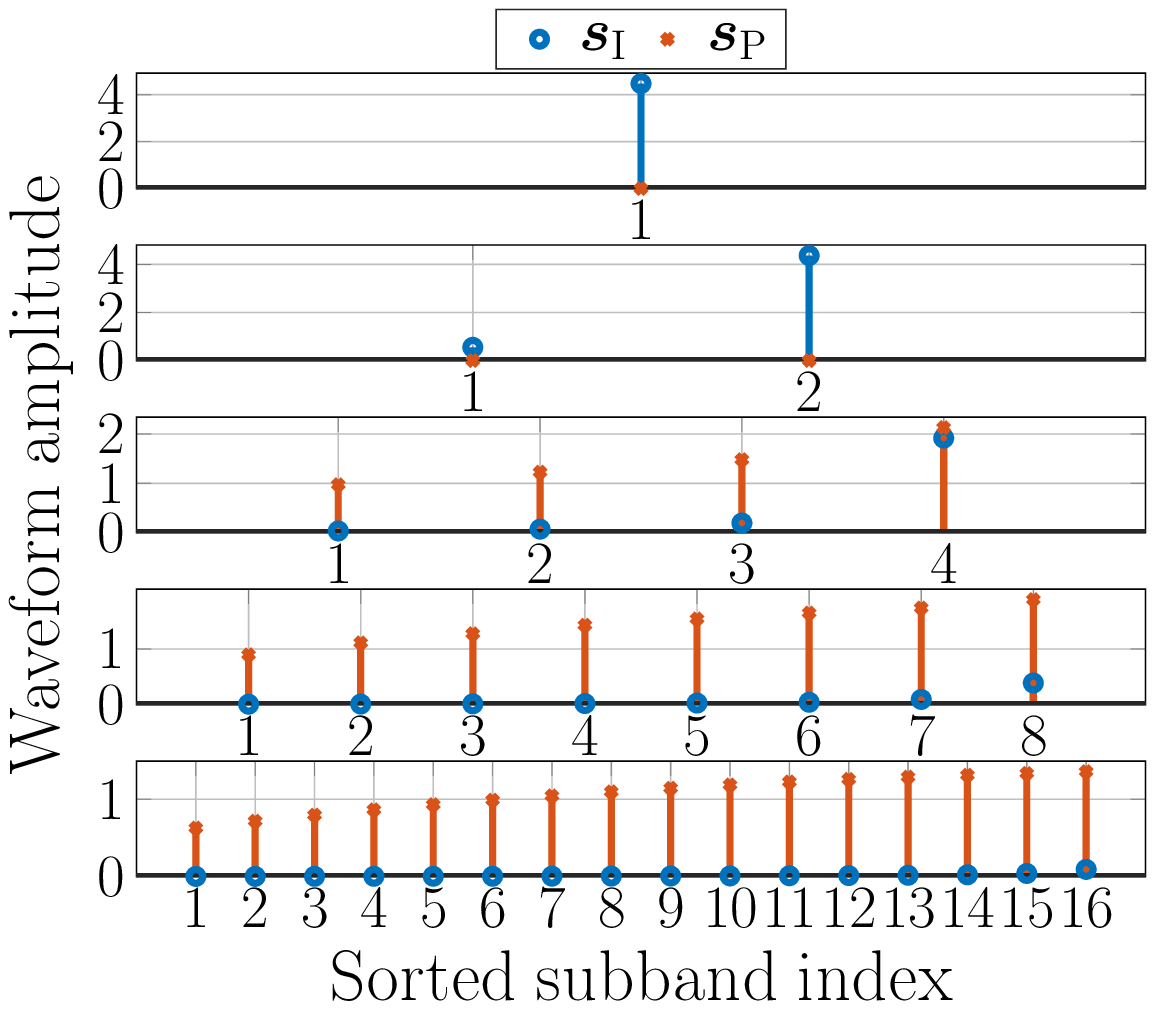}
				}
			}
			\caption{Average R-E region and WPT waveform amplitude versus $N$ for $M=1$, $L=20$, $\sigma_n^2=\SI{-40}{\dBm}$, $B=\SI{1}{\MHz}$ and $d_{\mathrm{H}}=d_{\mathrm{V}}=\SI{2}{\meter}$.}
		\end{figure}

		Fig.~\subref*{fi:re_subband} illustrates the average R-E region versus the number of subband $N$. First, it is observed that increasing $N$ reduces the per-subband rate but boosts the harvested energy. This is because less power is allocated to each subband but more balanced DC terms are introduced by frequency coupling to boost the harvested energy. On the other hand, Fig.~\subref*{fi:waveform_subband} presents the sorted modulated/multisine amplitude $\boldsymbol{s}_{\mathrm{I/P}}$ for WPT. It demonstrates that a dedicated multisine waveform is unnecessary for a small $N$ but is required for a large $N$. This observation origins from the rectifier nonlinearity. Although both waveforms have equivalent second-order DC terms \eqref{eq:y_I2} and \eqref{eq:y_P2}, for the fourth-order terms \eqref{eq:y_I4} and \eqref{eq:y_P4}, the modulated waveform has $N^2$ monomials with a modulation gain of \num{2}, while the multisine has $(2N^3+N)/3$ monomials as the components of different frequencies compensate and produce DC. Second, the R-E region is convex for $N \in \{2,4\}$ and concave-convex for $N \in \{8,16\}$, such that PS outperforms TS for a small $N$ and is outperformed for a large $N$. When $N$ is in between, the optimal strategy is a combination of both, i.e., a time sharing between the WPT point and the saddle PS SWIPT point (as denoted by the red curve in Fig.~\subref*{fi:re_subband}). When $N$ is relatively small, only modulated waveform is used at both WIT and WPT points, and one can infer that no multisine waveform is needed for the entire R-E region. It aligns with the conclusion based on the conventional linear harvester model, namely the R-E region is convex, PS outperforms TS, and dedicated power waveform is unnecessary. As $N$ becomes sufficiently large, the multisine waveform further boosts WPT and creates some concavity in the high-power region, which accounts for the superiority of TS under the nonlinear harvester model. Therefore, we conclude that the rectifier nonlinearity enlarges the R-E region by favoring a different waveform and receiving mode, both heavily depending on $N$.

		\begin{figure}[!t]
			\centering
			\subfloat[R-E region\label{fi:re_noise}]{
				\resizebox{0.45\columnwidth}{!}{
					\includegraphics{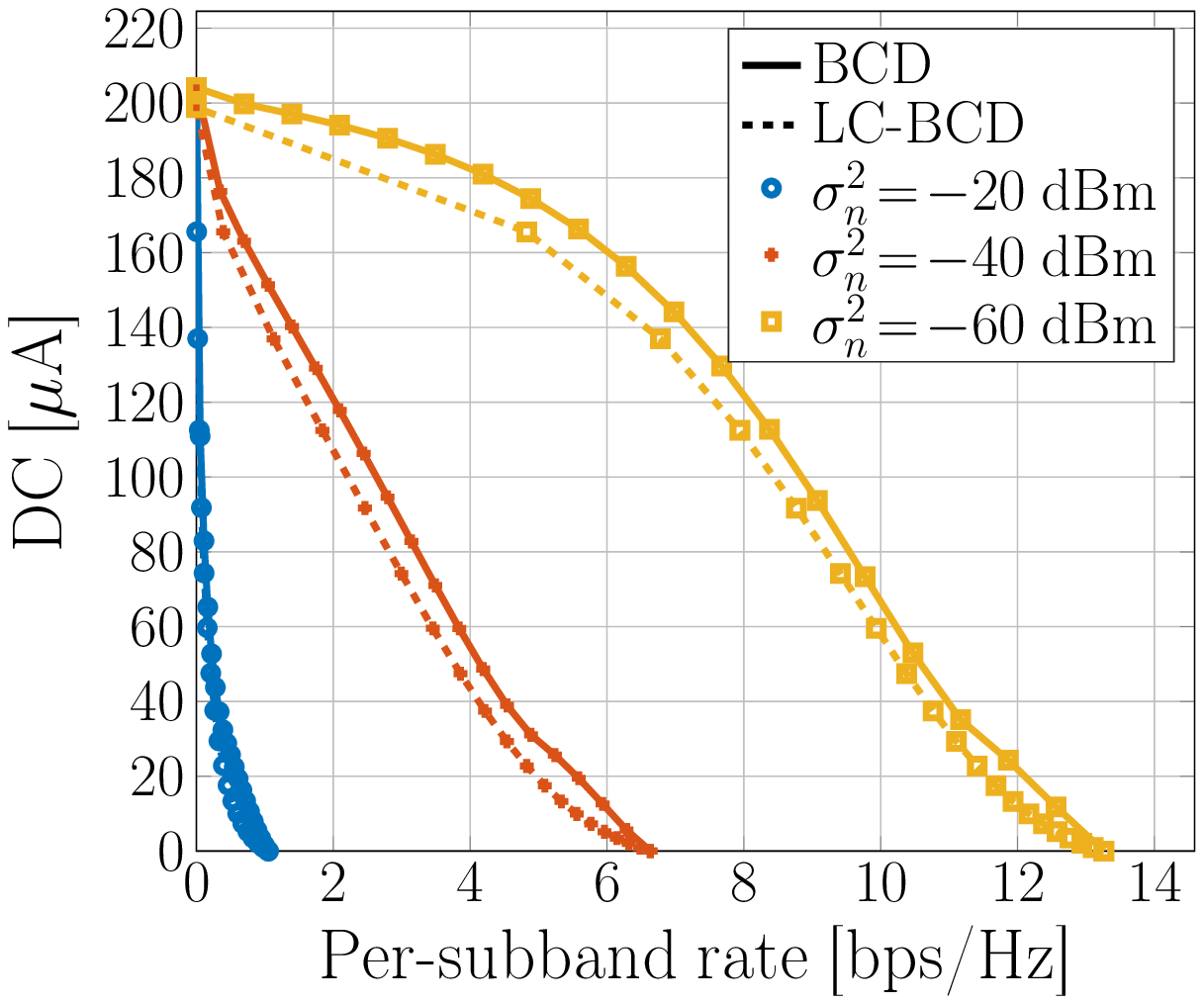}
				}
			}
			\subfloat[Splitting ratio\label{fi:splitting_ratio_noise}]{
				\resizebox{0.45\columnwidth}{!}{
					\includegraphics{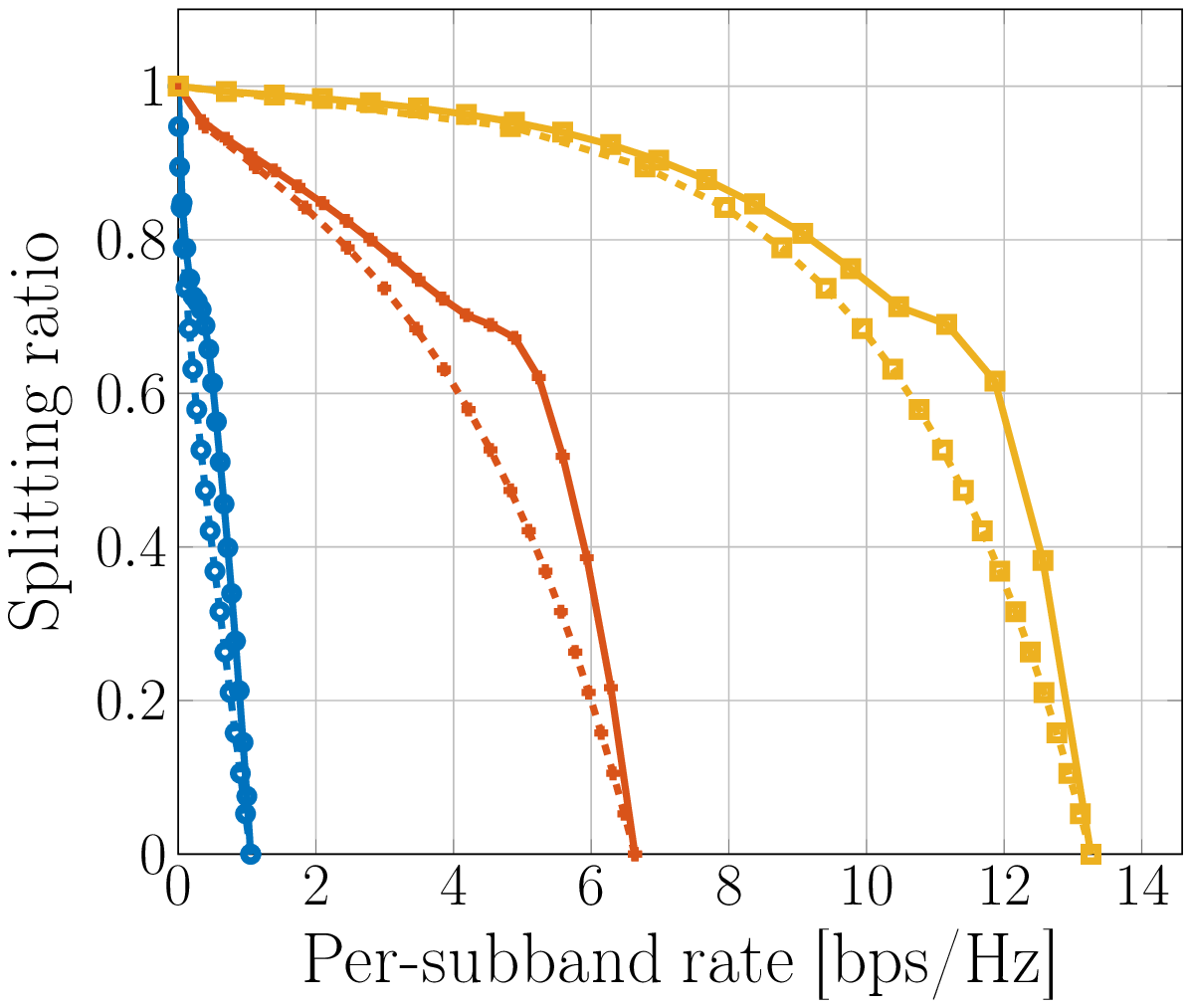}
				}
			}
			\caption{Average R-E region and splitting ratio versus $\sigma_n^2$ for $M=1$, $N=16$, $L=20$, $B=\SI{1}{\MHz}$ and $d_{\mathrm{H}}=d_{\mathrm{V}}=\SI{2}{\meter}$.}
		\end{figure}

		The average noise power influences the R-E region as shown in Fig.~\subref*{fi:re_noise}. First, we note that the R-E region is roughly concave/convex at low/high SNR such that TS/PS are preferred correspondingly. At low SNR, the power is allocated to the modulated waveform on a few strongest subbands to achieve a high rate. As the rate constraint $\bar{R}$ decreases, Algorithm~\ref{al:gp} activates more subbands that further boosts the harvested DC power because of frequency coupling and harvester nonlinearity. Second, there exists a turning point in the R-E region, especially for a low noise level ($\sigma_n^2 \le \SI{-40}{\dBm}$). The reason is that when $\bar{R}$ departs slightly from the maximum value, the algorithm tends to adjust the splitting ratio $\rho$ rather than allocate more power to the multisine waveform, since a small amplitude multisine could be inefficient for energy purpose. As $\bar{R}$ further decreases, thanks to the advantage of multisine, a superposed waveform with a small $\rho$ can outperform a modulated waveform with a large $\rho$. The result proves the benefit of superposed waveform and the necessity of joint waveform and splitting ratio optimization. Besides, the LC-BCD algorithm achieves a good balance between performance and complexity even if one-dimensional search is considered for $\delta=\rho$ from \num{0} to \num{1}.

		\begin{figure}[!t]
			\centering
			\subfloat[R-E region\label{fi:re_distance}]{
				\resizebox{0.45\columnwidth}{!}{
					\includegraphics{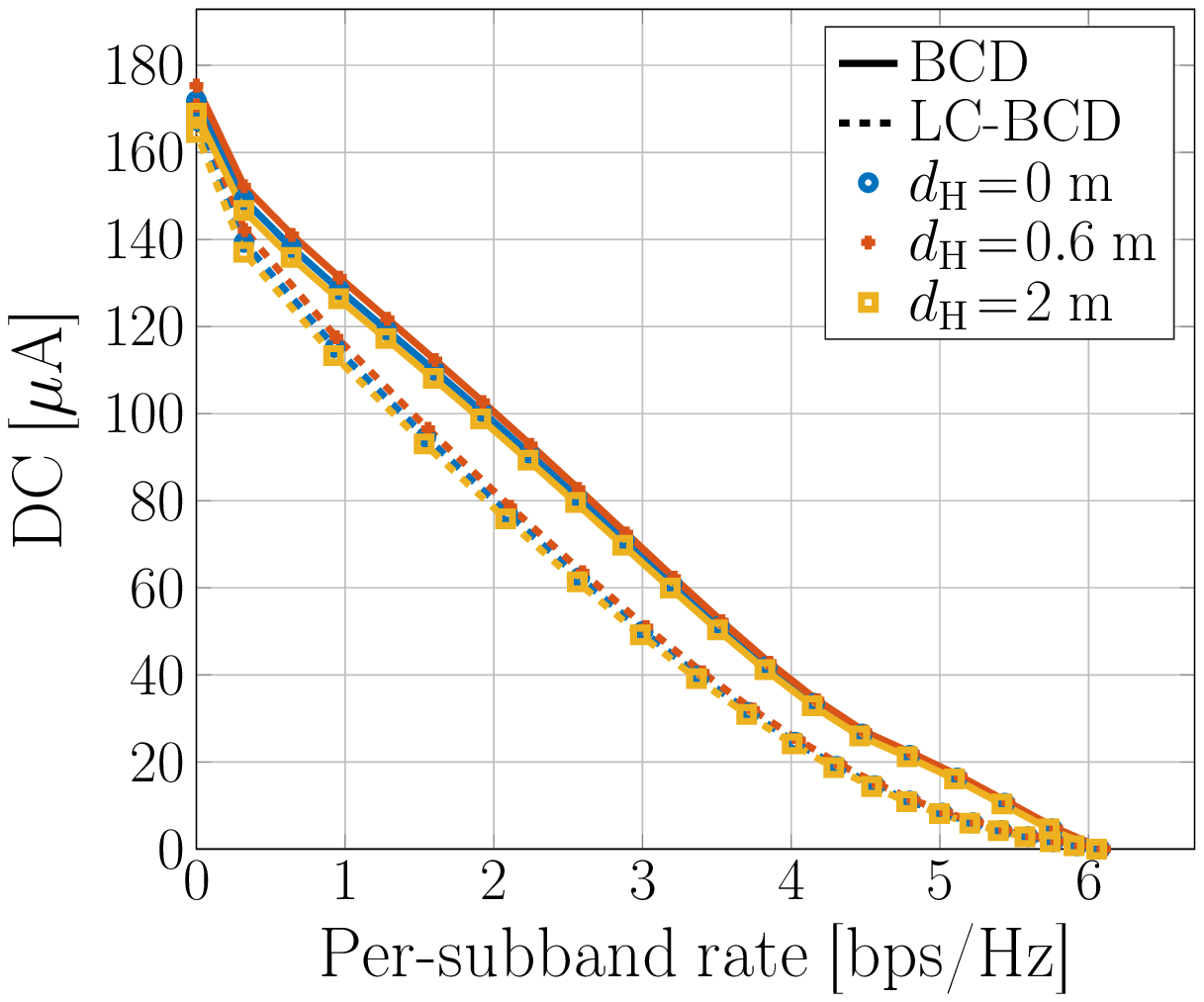}
				}
			}
			\subfloat[Path loss product\label{fi:path_loss}]{
				\resizebox{0.45\columnwidth}{!}{
					\includegraphics{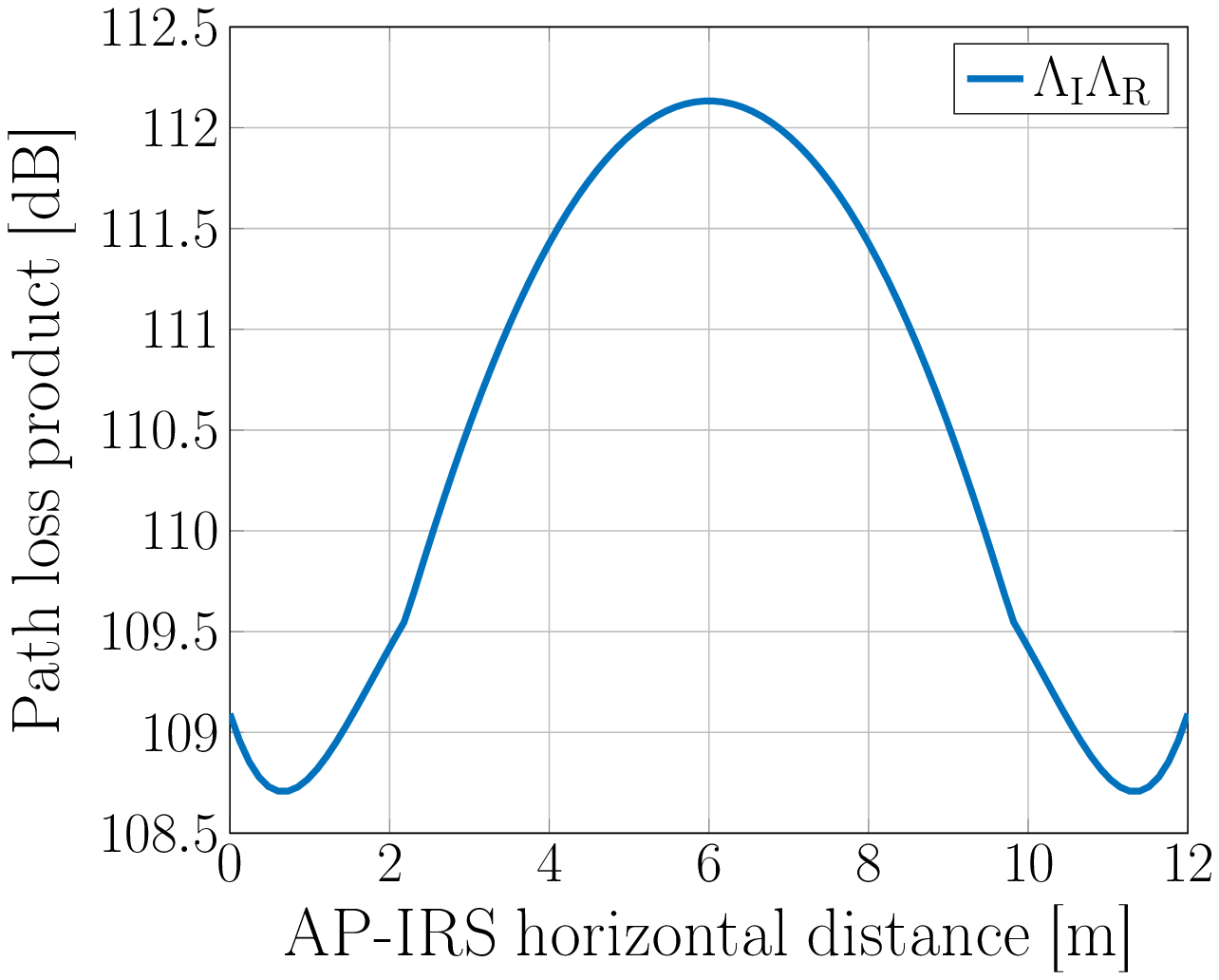}
				}
			}
			\caption{Average R-E region and path loss versus $d_{\mathrm{H}}$ for $M=1$, $N=16$, $L=20$, $\sigma_n^2=\SI{-40}{\dBm}$, $B=\SI{1}{\MHz}$ and $d_{\mathrm{V}}=\SI{2}{\meter}$.}
		\end{figure}

		In Fig.~\subref*{fi:re_distance}, we compare the average R-E region achieved by different AP-IRS horizontal distance $d_{\mathrm{H}}$. Different from the active Amplify-and-Forward (AF) relay that favors midpoint development \cite{Li2017}, the IRS should be placed close to either the AP or the UE based on the product path loss model that applies to finite-size element reflection \cite{Ozdogan2020,Tang2021}. Moreover, there exist two optimal IRS coordinates around $d_{\mathrm{H}}=0.6$ and \SI{11.4}{\meter} that minimize the path loss product $\Lambda_{\mathrm{I}}\Lambda_R$ and maximize the R-E tradeoff. It suggests that equipping the AP with an IRS can potentially extend the operation range of SWIPT systems.

		\begin{figure}[!t]
			\centering
			\subfloat[R-E region\label{fi:re_tx}]{
				\resizebox{0.45\columnwidth}{!}{
					\includegraphics{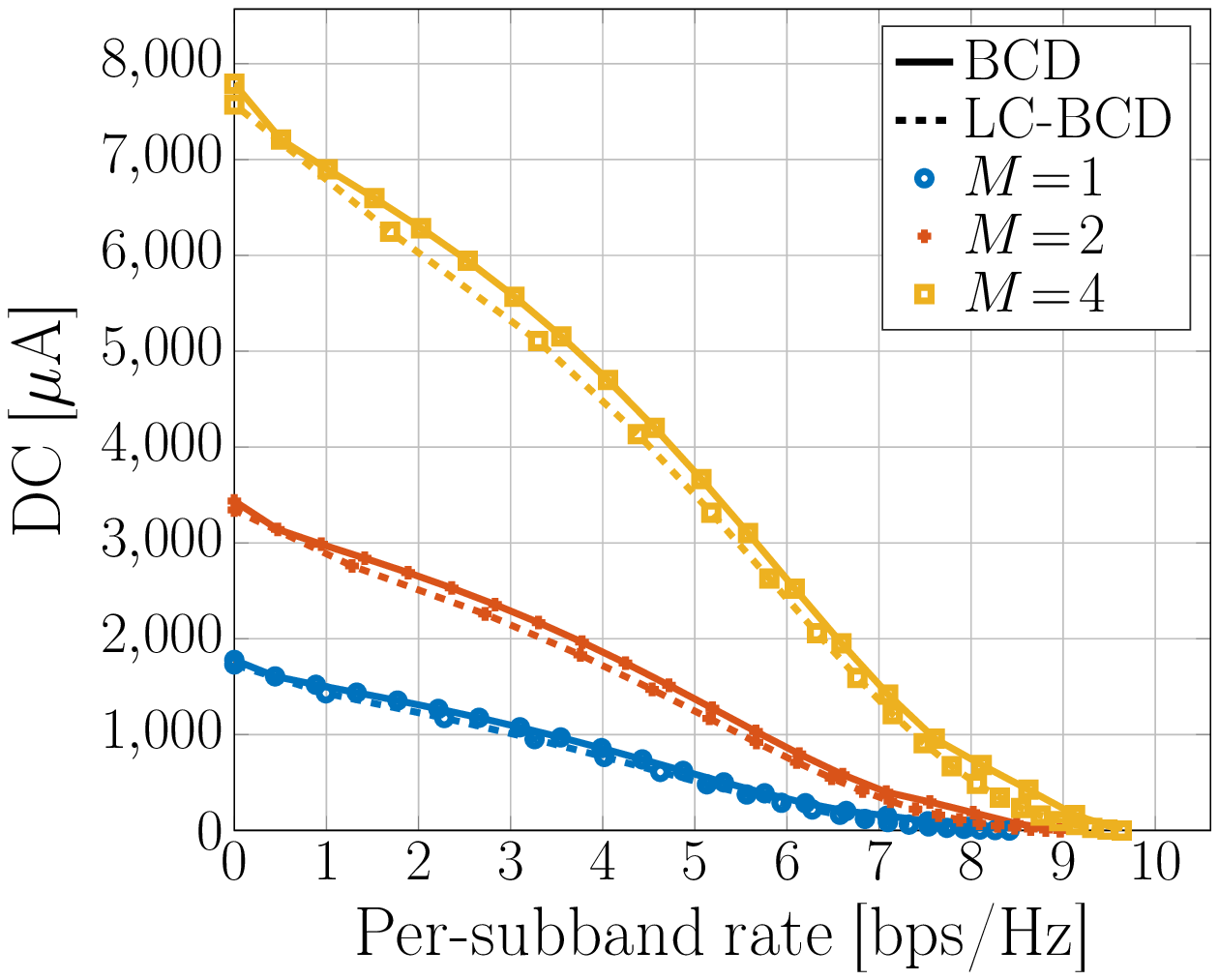}
				}
			}
			\subfloat[WIT SNR and WPT DC\label{fi:scaling_tx}]{
				\resizebox{0.45\columnwidth}{!}{
					\includegraphics{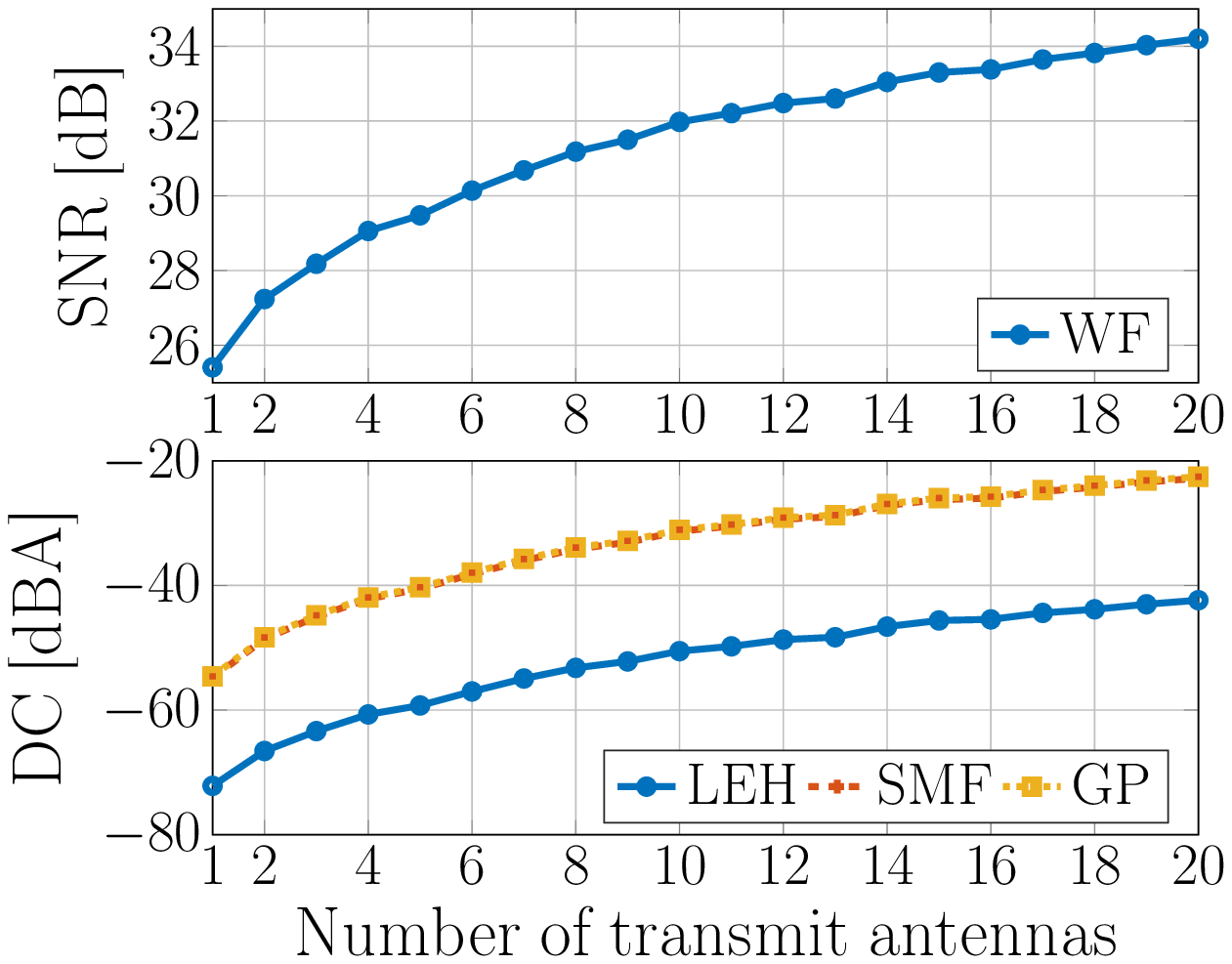}
				}
			}
			\caption{Average R-E region, WIT SNR and WPT DC versus $M$ for $N=16$, $L=20$, $\sigma_n^2=\SI{-40}{\dBm}$, $B=\SI{1}{\MHz}$, $d_{\mathrm{H}}=d_{\mathrm{V}}=\SI{0.2}{\meter}$.}
		\end{figure}

		\begin{figure}[!t]
			\centering
			\subfloat[R-E region\label{fi:re_reflector}]{
				\resizebox{0.425\columnwidth}{!}{
					\includegraphics{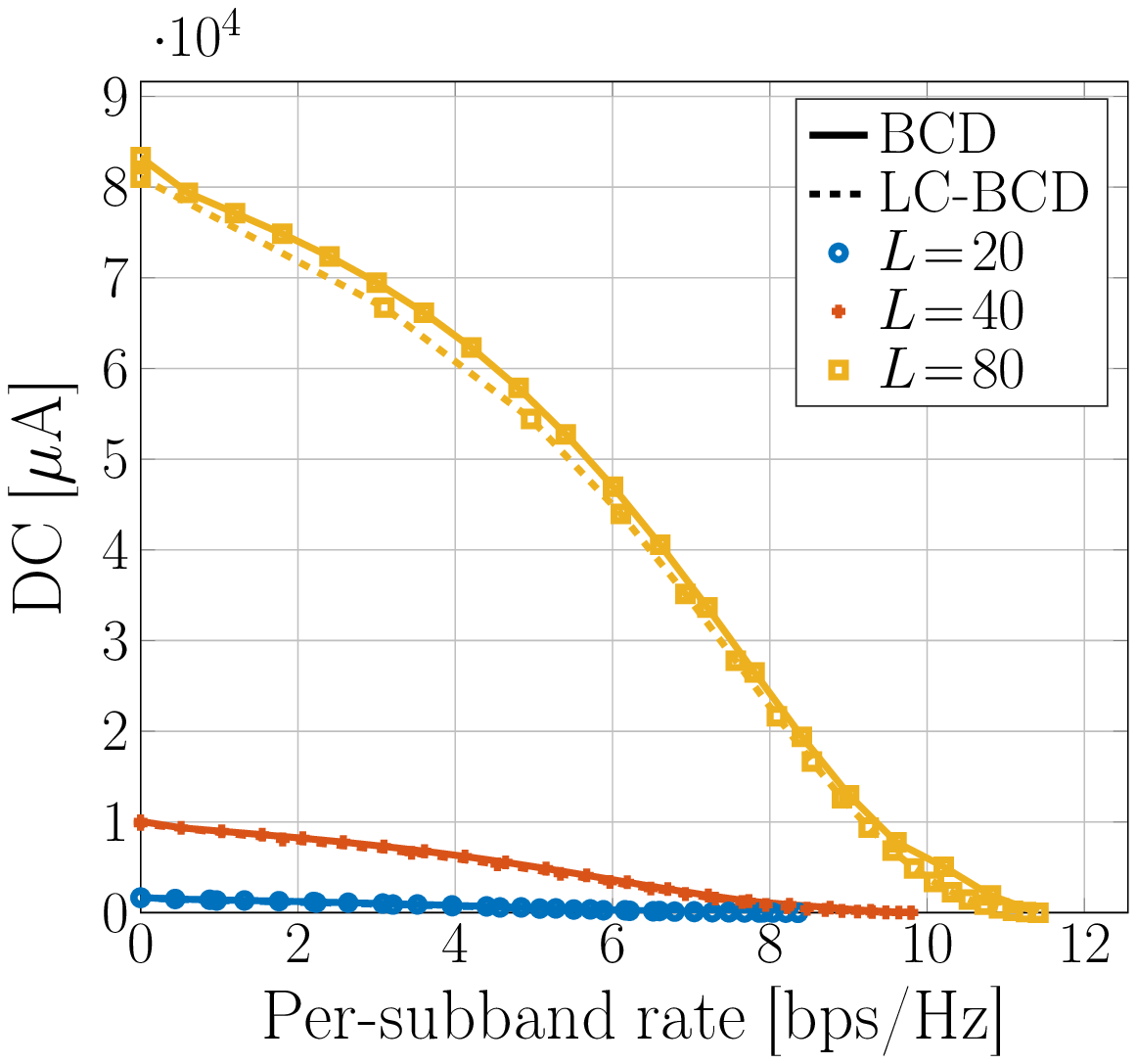}
				}
			}
			\subfloat[WIT SNR and WPT DC\label{fi:scaling_reflector}]{
				\resizebox{0.475\columnwidth}{!}{
					\includegraphics{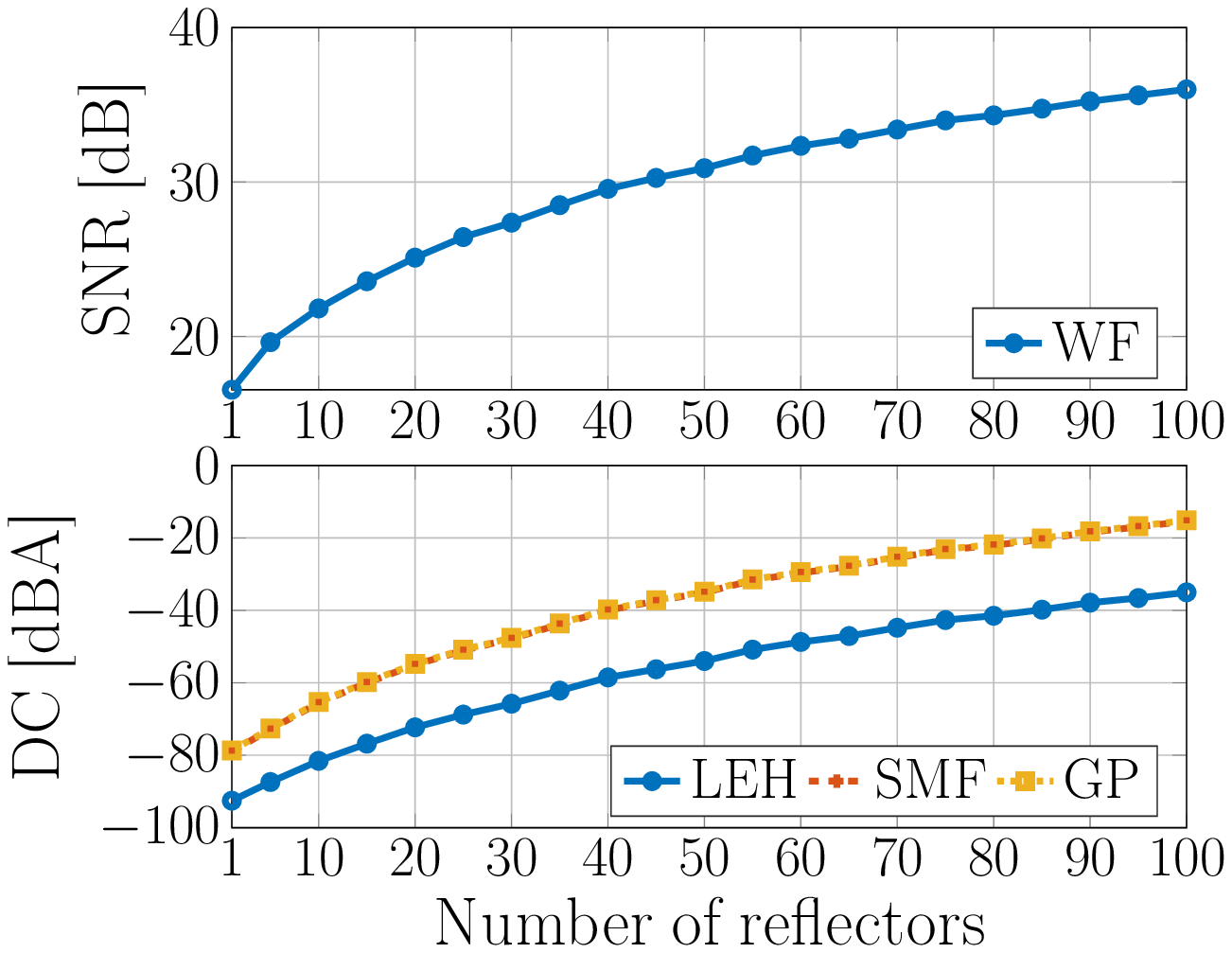}
				}
			}
			\caption{Average R-E region, WIT SNR and WPT DC versus $L$ for $M=1$, $N=16$, $\sigma_n^2=\SI{-40}{\dBm}$, $B=\SI{1}{\MHz}$ and $d_{\mathrm{H}}=d_{\mathrm{V}}=\SI{0.2}{\meter}$.}
		\end{figure}

		The impacts of the number of transmit antennas $M$ and the IRS elements $L$ on the R-E behavior are revealed in Figs.~\subref*{fi:re_tx} and \subref*{fi:re_reflector}. First, it is observed that adding either active or passive elements can improve the equivalent SNR, which produces a nearly concave R-E region and favors the PS receiver. Second, the conventional Linear Energy Harvester (LEH) model leads to a power-inefficient design. To investigate the performance loss, we truncate the DC objective function \eqref{eq:z} at $n_0=2$ such that (i) in the passive beamforming problem, $z(\boldsymbol{\Phi}) = {\beta_2}{\rho}(t_{\mathrm{I},0}+t_{\mathrm{P},0})/2$ and no SCA is required; (ii) in the waveform design problem, the WPT-optimal strategy is the adaptive single sinewave that allocates all power to the multisine at the strongest subband \cite{Clerckx2016a}. As shown in Figs.~\subref*{fi:scaling_tx} and \subref*{fi:scaling_reflector}, those conventional designs do not exploit the harvester nonlinearity and end up with a nearly \SI{20}{\dBA} gap compared to the nonlinear model-based SMF and GP designs. Third, doubling $M$ brings a \SI{3}{\dB} gain at the output SNR and a \SI{12}{\dBA} increase at the harvested DC, which verified that active beamforming has an array gain of $M$ \cite{Tse2005} with power scaling order $M^2$ under the truncated nonlinear harvester model \cite{Clerckx2016a,Clerckx2018b}. Fourth, when the IRS is very close to the AP or UE, doubling $L$ can bring a \SI{6}{\dB} gain at the output SNR and a \SI{24}{\dBA} increase at the harvested DC. From the perspective of WIT, it suggests that passive beamforming can reach an array gain of $L^2$, as indicated by \cite{Wu2019}. An interpretation is that the IRS coherently combines the incoming signal with a receive array gain $L$, then performs an equal gain reflection with a transmit array gain $L$. From the perspective of WPT, it suggests that passive beamforming comes with a power scaling order $L^4$ under the truncated nonlinear harvester model. We then verify this novel observation in a simplified case where the power is uniformly allocated over multisine, all channels are frequency-flat, and $L$ is sufficiently large such that the direct channel becomes negligible. Let $X$ be the cascaded small-scale fading coefficient. The DC in such case reduces to
		\begin{equation}
			z = \beta_2 \Lambda_{\mathrm{R}}^2 \Lambda_{\mathrm{I}}^2 \lvert X \rvert^2 L^2 P + \beta_4 \frac{2N^2 + 1}{2N} \Lambda_{\mathrm{R}}^4 \Lambda_{\mathrm{I}}^4 \lvert X \rvert^4 L^4 P^2,
		\end{equation}
		which scales quartically with $L$. Compared with active antennas, IRS elements achieve higher array gain and power scaling order, but a very large $L$ is required to compensate the double fading of the auxiliary link. These observations demonstrate the R-E benefit of passive beamforming and emphasize the importance of accounting for the harvester nonlinearity in the waveform and beamforming design.

		\begin{figure}[!t]
			\centering
			\subfloat[$B=\SI{1}{\MHz}$\label{fi:re_irs_1mhz}]{
				\resizebox{0.45\columnwidth}{!}{
					\includegraphics{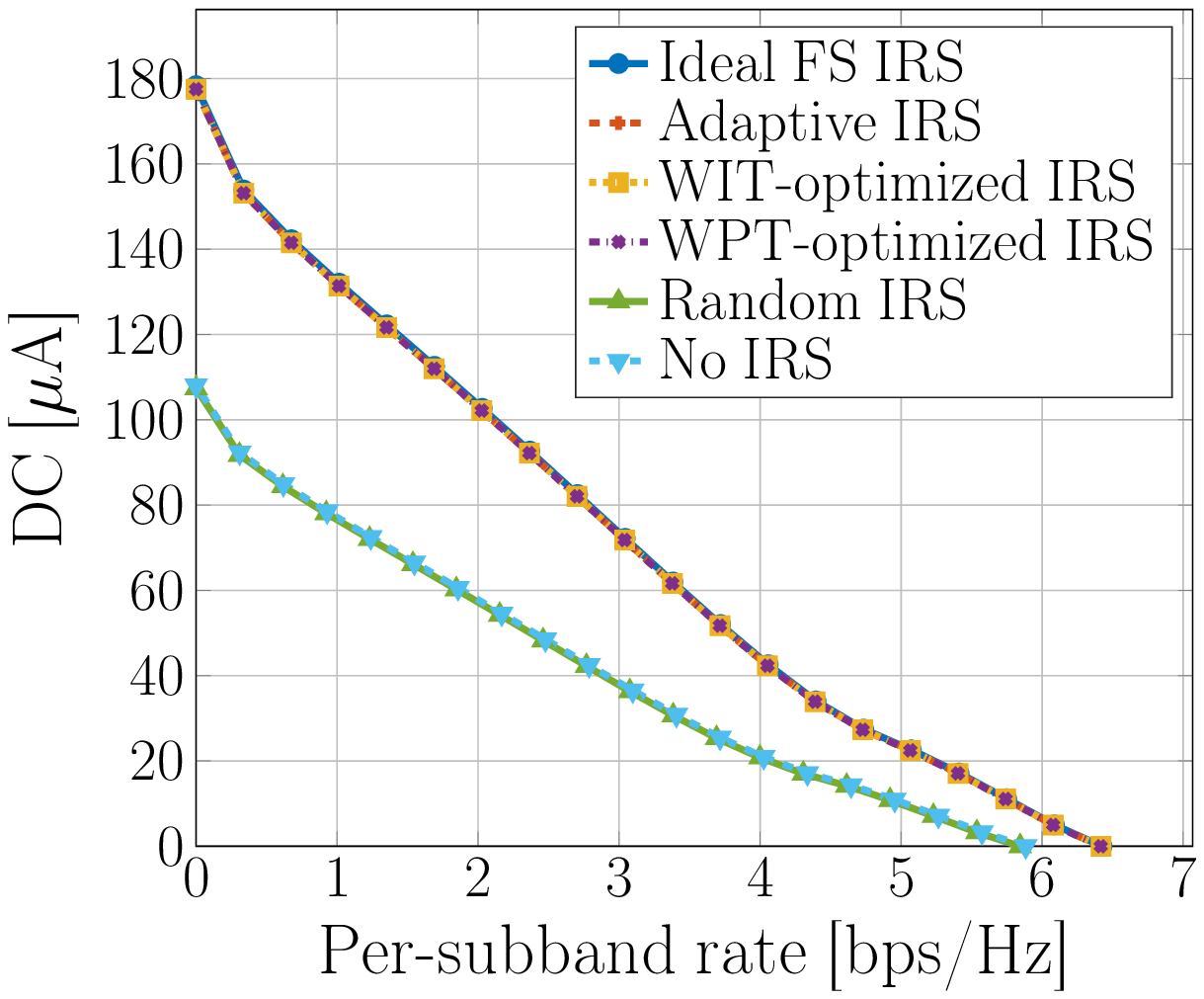}
				}
			}
			\subfloat[$B=\SI{10}{\MHz}$\label{fi:re_irs_10mhz}]{
				\resizebox{0.45\columnwidth}{!}{
					\includegraphics{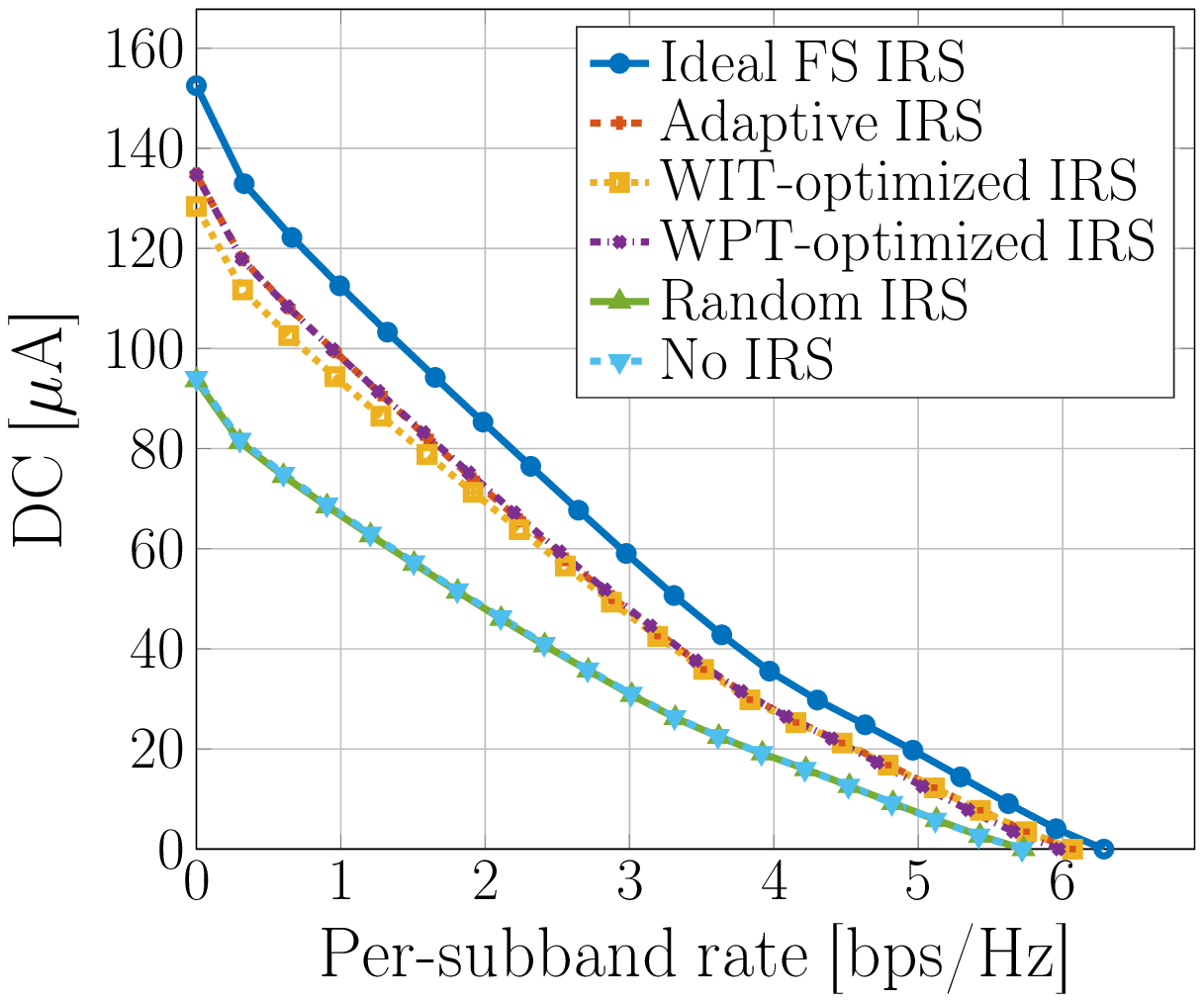}
				}
			}
			\caption{Average R-E region for ideal, adaptive, fixed and no IRS versus $B$ for $M=1$, $N=16$, $L=20$, $\sigma_n^2=\SI{-40}{\dBm}$ and $d_{\mathrm{H}}=d_{\mathrm{V}}=\SI{2}{\meter}$.}
		\end{figure}

		Figs.~\subref*{fi:re_irs_1mhz} and \subref*{fi:re_irs_10mhz} explore the R-E region with different IRS strategies for narrowband and broadband SWIPT. The ideal Frequency-Selective (FS) IRS assumes the reflection coefficient of each element is independent and controllable at different frequencies. The adaptive IRS adjusts the passive beamforming for different R-E points by Algorithm~\ref{al:sca}. The WIT/WPT-optimized IRS is retrieved by Algorithm~\ref{al:m_sca} then fixed for the whole R-E region. The random IRS models the phase shift of all elements as i.i.d. uniform random variables over $[0, 2\pi)$. First, random IRS and no IRS perform worse than other schemes since no passive beamforming is exploited. Their R-E boundaries coincide as the antenna mode reflection of the random IRS is canceled out after averaging. Second, when the bandwidth is small, the performance of ideal, adaptive, and WIT/WPT-optimized IRS are similar; when the bandwidth is large, the adaptive IRS outperforms the WIT/WPT-optimized IRS but is outperformed by the ideal FS IRS. In the former case, the subband responses are close to each other such that the tradeoff in Remark~\ref{re:subband_tradeoff} becomes insignificant, and the auxiliary link can be roughly maximized at all subbands. It suggests that for narrowband SWIPT, the optimal passive beamforming for any R-E point is optimal for the whole R-E region, and the corresponding composite channel and active precoder are also optimal for the whole R-E region. Hence, the achievable R-E region is obtained by optimizing the waveform amplitude and splitting ratio. On the other hand, since the channel frequency selectivity affects the performance of the information decoder and energy harvester differently, the optimal IRS reflection coefficient varies at different R-E tradeoffs points for broadband SWIPT. As shown in Fig.~\ref{fi:channel_amplitude}, the subchannel amplification can be either spread evenly to improve the rate at high SNR, or focused on a few strongest subbands to boost the output DC, thanks to adaptive passive beamforming.

		\begin{figure}[!t]
			\centering
			\subfloat[Imperfect cascaded CSIT\label{fi:re_csi}]{
				\resizebox{0.45\columnwidth}{!}{
					\includegraphics{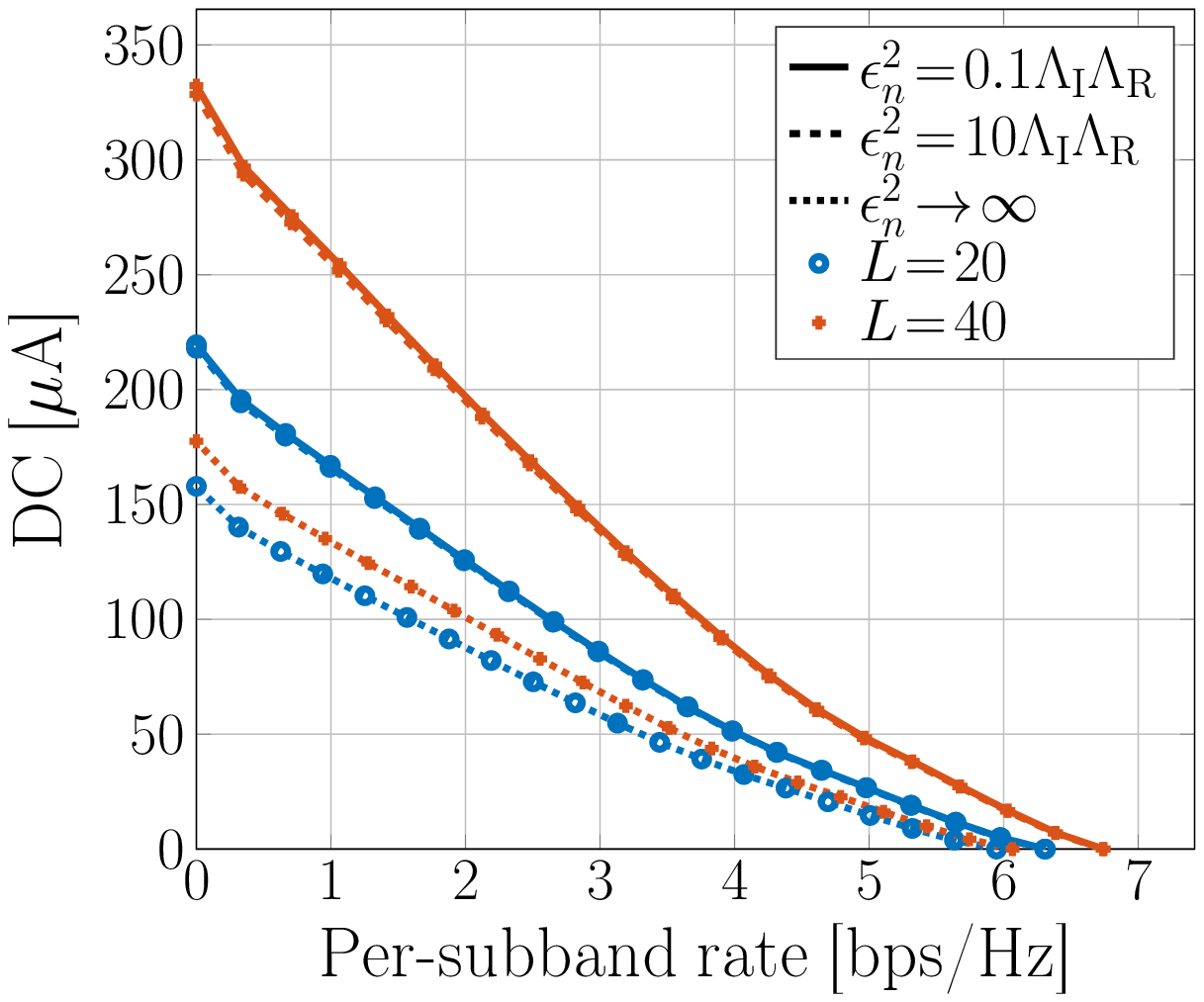}
				}
			}
			\subfloat[Quantized IRS\label{fi:re_quantization}]{
				\resizebox{0.45\columnwidth}{!}{
					\includegraphics{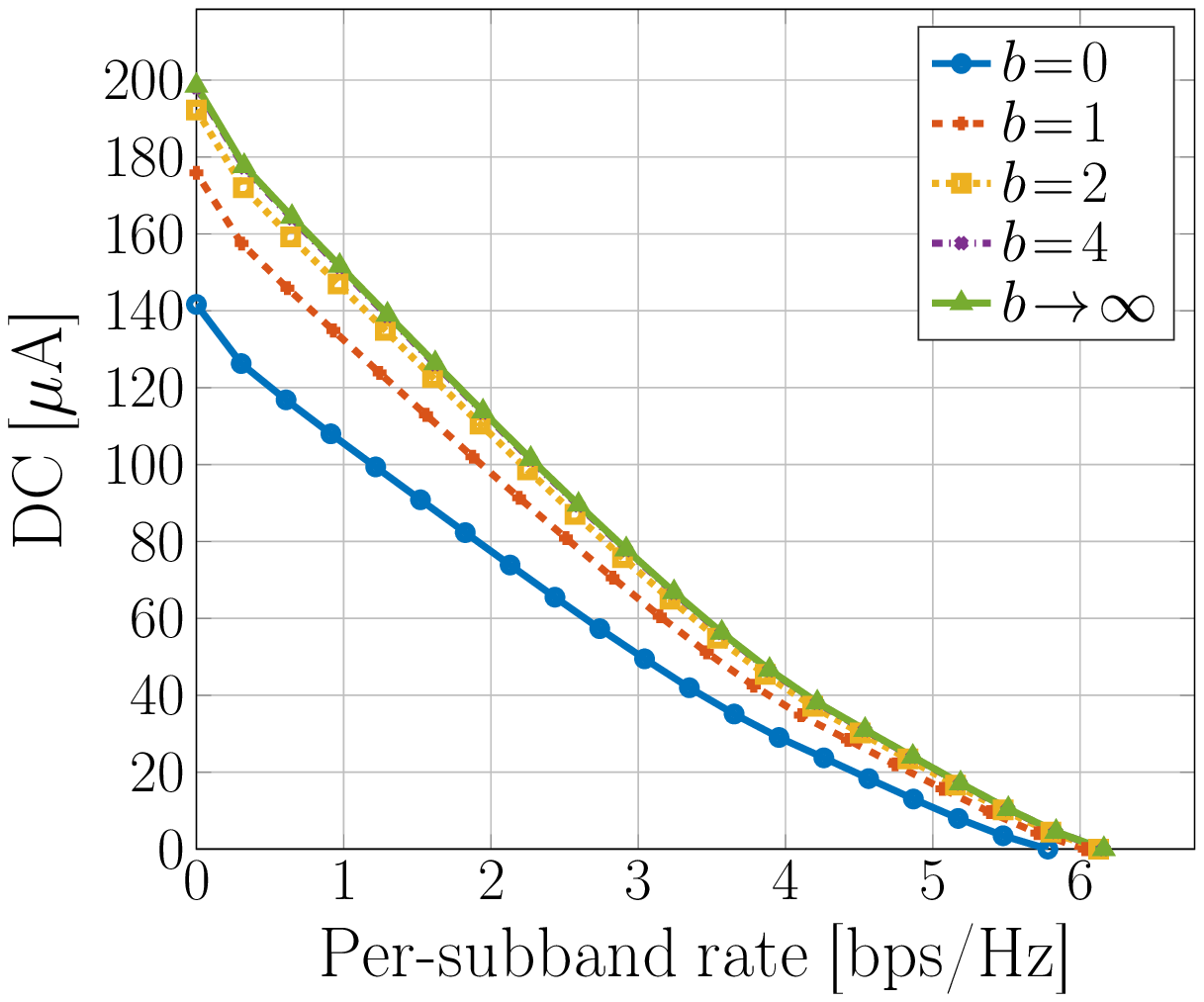}
				}
			}
			\caption{Average R-E region with imperfect cascaded CSIT and quantized IRS for $M=1$, $N=16$, $L=20$, $\sigma_n^2=\SI{-40}{\dBm}$, $B=\SI{10}{\MHz}$ and $d_{\mathrm{H}}=d_{\mathrm{V}}=\SI{2}{\meter}$. $\epsilon_{n}=0$ and $\epsilon_{n}=\infty$ correspond respectively to perfect CSIT and no CSIT (and random IRS); $b=0$ and $b \to \infty$ correspond respectively to no IRS and continuous IRS.}
		\end{figure}

		We then explore the impacts of imperfect cascaded CSIT and quantized IRS on the R-E performance. Due to the general lack of RF-chains at the IRS, it can be challenging to acquire accurate cascaded CSIT on a short-term basis. We assume the cascaded channel at subband $n$ is
		\begin{equation}
			\boldsymbol{V}_{n} = \hat{\boldsymbol{V}}_{n} + \tilde{\boldsymbol{V}}_{n},
		\end{equation}
		where $\hat{\boldsymbol{V}}_{n}$ is the estimated cascaded CSIT and $\tilde{\boldsymbol{V}}_{n}$ is the estimation error with entries following i.i.d. CSCG distribution $\mathcal{CN}(0, \epsilon_{n}^2)$.\footnote{Note that the subchannel responses are correlated but the estimations can be independent.} Figure~\subref*{fi:re_csi} shows that the proposed passive beamforming Algorithm~\ref{al:sca} is robust to cascaded CSIT inaccuracy for broadband SWIPT with different $L$. On the other hand, since the practical reflection coefficient depends on the available element impedances, we consider a discrete IRS codebook $\mathcal{C}_\phi = \{e^{j 2 \pi i / 2^b} \mid i = 1, \dots, 2^b\}$ and uniformly quantize the continuous reflection coefficients obtained by Algorithm~\ref{al:bcd} to reduce the circuit complexity and control overhead.\footnote{This relax-then-quantize approach can bring notable performance loss compared with direct optimization over the discrete phase shift set, especially for a small $b$ (i.e., low-resolution IRS) \cite{Wu2020c}.} Figure~\subref*{fi:re_quantization} suggests that even $b=1$ (i.e., two-state reflection) brings considerable R-E gain over the benchmark scheme without IRS, and the performance gap between $b=4$ and unquantized IRS is negligible. These observations demonstrate the advantage of the proposed joint waveform, active and passive beamforming design in practical IRS-aided SWIPT systems.
	\end{section}

	\begin{section}{Conclusion and Future Works}\label{se:conclusion_and_future_works}
		This paper investigated the R-E tradeoff of a single user employing practical receiving strategies in an IRS-aided multi-carrier MISO SWIPT system. Uniquely, we considered the joint waveform, active and passive beamforming design under rectifier nonlinearity to maximize the achievable R-E region. A three-stage BCD algorithm was proposed to solve the problem. In the first stage, the IRS phase shift was obtained by the SCA technique and eigen decomposition. In the second and third stages, the active precoder was derived in closed form, and the waveform amplitude and splitting ratio were optimized by the GP method. We also proposed and combined closed-form adaptive waveform schemes with a modified passive beamforming strategy to formulate a low-complexity BCD algorithm that achieves a good balance between performance and complexity. Numerical results revealed significant R-E gains by modeling harvester nonlinearity in the IRS-aided SWIPT design. Unlike active antennas, IRS elements cannot be designed independently across frequencies, but can integrate coherent combining and equal gain transmission to enable constructive reflection and flexible subchannel design. Compared to the conventional no-IRS system, the IRS mainly affects the effective channel instead of the waveform design.

        One particular unanswered question of this paper is how to design waveform, active and passive beamforming in a multi-user multi-carrier IRS-aided SWIPT system. Also, harvester saturation effect and practical IRS models with amplitude-phase coupling \cite{Abeywickrama2020}, angle-dependent reflection \cite{Tang2021}, frequency-dependent reflection, and/or partially/fully-connected architecture \cite{Shen2020a} could be considered in future works.
	\end{section}

	\begin{appendix}
		\begin{subsection}{Proof of Proposition~\ref{pr:relaxation}}\label{ap:relaxation}
			For any feasible $\boldsymbol{\Phi}$ to problem~\eqref{op:irs}, $\mathrm{tr}(\boldsymbol{\Phi})=L+1$ always holds because of the modulus constraint~\eqref{co:irs_modulus}. Therefore, we add a constant term $-\mathrm{tr}(\boldsymbol{\Phi})$ to \eqref{ob:irs} and recast problem~\eqref{op:irs} as
			\begin{maxi!}
				{\scriptstyle{\boldsymbol{\Phi}}}{-\mathrm{tr}(\boldsymbol{\Phi})+\tilde{z}(\boldsymbol{\Phi})}{\label{op:irs_equivalent}}{\label{ob:irs_equivalent}}
				\addConstraint{R(\boldsymbol{\Phi}) \ge \bar{R}}\label{co:irs_equivalent_rate}
				\addConstraint{\mathrm{diag}^{-1}(\boldsymbol{\Phi})=\boldsymbol{1}}\label{co:irs_equivalent_modulus}
				\addConstraint{\boldsymbol{\Phi}\succeq{\boldsymbol{0}}}\label{co:irs_equivalent_sd}
				\addConstraint{\mathrm{rank}(\boldsymbol{\Phi})=1.\label{co:irs_equivalent_rank}}
			\end{maxi!}
			By applying SDR, problem~\eqref{ob:irs_equivalent}--\eqref{co:irs_equivalent_sd} is convex w.r.t. $\boldsymbol{\Phi}$ and satisfies the Slater's condition \cite{Boyd2004}, and strong duality holds. The corresponding Lagrangian function at iteration $i$ is given by \eqref{eq:lagrangian}, where $\mu$, $\boldsymbol{\nu}$, $\boldsymbol{\Upsilon}$ denote respectively the scalar, vector and matrix Lagrange multiplier associated with constraint~\eqref{co:irs_equivalent_rate}, \eqref{co:irs_equivalent_modulus} and \eqref{co:irs_equivalent_sd}, and $\zeta$ collects all terms irrelevant to $\boldsymbol{\Phi}^{(i)}$.
			\begin{figure*}[!b]
				\hrule
				\begin{align}\label{eq:lagrangian}
					\mathcal{L}
					& = \mathrm{tr}\left(\boldsymbol{\Phi}^{(i)}\right) - \frac{1}{2} \beta_2 \rho \mathrm{tr}\Bigl(
							(\boldsymbol{C}_{\mathrm{I},0} + \boldsymbol{C}_{\mathrm{P},0}) \boldsymbol{\Phi}^{(i)}
						\Bigr) - \frac{3}{4} \beta_4 \rho^2 \Biggl(
							2 t_{\mathrm{I},0}^{(i-1)} \mathrm{tr}\Bigl(
								\boldsymbol{C}_{\mathrm{I},0} \boldsymbol{\Phi}^{(i)}
							\Bigr) + \sum_{k=-N+1}^{N-1} (t_{\mathrm{P},k}^{(i-1)})^* \mathrm{tr}\Bigl(
								\boldsymbol{C}_{\mathrm{P},k} \boldsymbol{\Phi}^{(i)}
							\Bigr)\nonumber\\
					& \quad + 2 t_{\mathrm{P},0}^{(i-1)} \mathrm{tr}\Bigl(
							\boldsymbol{C}_{\mathrm{I},0} \boldsymbol{\Phi}^{(i)}
						\Bigr) + 2 t_{\mathrm{I},0}^{(i-1)} \mathrm{tr}\Bigl(
							\boldsymbol{C}_{\mathrm{P},0} \boldsymbol{\Phi}^{(i)}
						\Bigr)
						\Biggr) + \mu \Biggl(
						2^{\bar{R}} - \prod_{n=1}^N \biggl(
							1 + \frac{(1-\rho) \mathrm{tr}\Bigl(
								\boldsymbol{C}_n \boldsymbol{\Phi}^{(i)}
							\Bigr)}{\sigma_n^2}
						\biggr)
					\Biggr)\nonumber\\
					& \quad + \mathrm{tr}\biggl(
						\mathrm{diag}(\boldsymbol{\nu}) \odot \Bigl(
							\boldsymbol{\Phi}^{(i)} \odot \boldsymbol{I} - \boldsymbol{I}
						\Bigr)
					\biggr) - \mathrm{tr} \Bigl(
						\boldsymbol{\Upsilon} \boldsymbol{\Phi}^{(i)}
					\Bigr) + \zeta.
				\end{align}
			\end{figure*}
			The Karush–Kuhn–Tucker (KKT) conditions on the primal and dual solutions are
			\begin{subequations}
				\begin{equation}\label{eq:lagrange_multiplier}
					\mu^\star \ge 0, \boldsymbol{\Upsilon}^\star \succeq \boldsymbol{0},
				\end{equation}
				\begin{equation}\label{eq:complementary_slackness}
					\boldsymbol{\nu}^\star \odot \mathrm{diag}^{-1}(\boldsymbol{\Phi}^\star) = \boldsymbol{0}, \boldsymbol{\Upsilon}^\star \boldsymbol{\Phi}^\star = \boldsymbol{0},
				\end{equation}
				\begin{equation}\label{eq:gradient}
					\nabla_{\boldsymbol{\Phi}^\star} \mathcal{L} = 0.
				\end{equation}
			\end{subequations}
			We then derive the gradient explicitly and rewrite \eqref{eq:gradient} as
			\begin{equation}
				\boldsymbol{\Upsilon}^\star = \boldsymbol{I} - \boldsymbol{\Delta}^\star,
			\end{equation}
			where $\boldsymbol{\Delta}^\star$ is given by \eqref{eq:delta}.
			\begin{figure*}[!b]
				\hrule
				\begin{align}\label{eq:delta}
					\boldsymbol{\Delta}^\star
					& = \frac{1}{2} \beta_2 \rho (\boldsymbol{C}_{\mathrm{I},0}+\boldsymbol{C}_{\mathrm{P},0}) + \frac{3}{4} \beta_4 \rho^2
						\Biggl(
							2 t_{\mathrm{I},0}^{(i-1)} \boldsymbol{C}_{\mathrm{I},0} + \sum_{k=-N+1}^{N-1} (t_{\mathrm{P},k}^{(i-1)})^* \boldsymbol{C}_{\mathrm{P},k} + 2 t_{\mathrm{P},0}^{(i-1)} \boldsymbol{C}_{\mathrm{I},0} + 2 t_{\mathrm{I},0}^{(i-1)} \boldsymbol{C}_{\mathrm{P},0}
						\Biggr)\nonumber\\
					& \quad + \mu^\star \sum_{n=1}^N \frac{(1-\rho) \boldsymbol{C}_n}{\sigma_n^2} \prod_{n'=1,n' \ne n}^N \Biggl(
						1 + \frac{(1-\rho)\mathrm{tr}\Bigl(
							\boldsymbol{C}_{n'} \boldsymbol{\Phi}^\star
						\Bigr)}{\sigma_{n'}^2}
					\Biggr) - \mathrm{diag}(\boldsymbol{\nu^\star}).
				\end{align}
			\end{figure*}
			Note that \eqref{eq:complementary_slackness} suggests $\mathrm{rank}(\boldsymbol{\Upsilon}^\star)+\mathrm{rank}(\boldsymbol{\Phi}^\star) \le L+1$. By reusing the proof in \cite[Appendix~A]{Xu2020}, we conclude $\mathrm{rank}(\boldsymbol{\Upsilon}^\star) \ge L$. On the other hand, $\boldsymbol{\Phi}^\star$ cannot be zero matrix and $\mathrm{rank}(\boldsymbol{\Phi}^\star) \ge 1$. Therefore, any optimal solution $\boldsymbol{\Phi}^\star$ to the relaxed problem~\eqref{op:irs_equivalent} is rank-\num{1}. Due to the equivalence between \eqref{ob:irs} and \eqref{ob:irs_equivalent}, $\boldsymbol{\Phi}^\star$ is also optimal to the relaxed problem~\eqref{op:irs} and Proposition~\ref{pr:relaxation} holds.
		\end{subsection}

		\begin{subsection}{Proof of Proposition~\ref{pr:sca}}\label{ap:sca}
			The objective function \eqref{ob:irs} is non-decreasing over iterations because the solution to \eqref{ob:irs}--\eqref{co:irs_sd} at iteration $i-1$ is still feasible at iteration $i$. Also, the sequence $\{\tilde{z}(\boldsymbol{\Phi}^{(i)})\}_{i=1}^{\infty}$ is bounded above because of the unit-modulus constraint \eqref{co:irs_modulus}. Thus, Algorithm~\ref{al:sca} is guaranteed to converge. Besides, we notice that Algorithm~\ref{al:sca} is an inner approximation algorithm \cite{Marks1978a}, because $\tilde{z}(\boldsymbol{\Phi}) \le z(\boldsymbol{\Phi})$, $\partial\tilde{z}(\boldsymbol{\Phi}^{(i)})/\partial\boldsymbol{\Phi}=\partial z(\boldsymbol{\Phi}^{(i)})/\partial\boldsymbol{\Phi}$ and the approximation \eqref{eq:taylor_1}--\eqref{eq:taylor_3} are asymptotically tight as $i \to \infty$ \cite{Li2013}. Therefore, it is guaranteed to provide a local optimal $\boldsymbol{\Phi}^{\star}$ to the relaxed passive beamforming problem. According to Proposition~\ref{pr:relaxation}, $\boldsymbol{\Phi}^{\star}$ is rank-\num{1} such that $\boldsymbol{\phi}^{\star}$ can be extracted without performance loss and the local optimality inherits to the original problem~\eqref{op:original}.
		\end{subsection}

		\begin{subsection}{Proof of Proposition~\ref{pr:mrt}}\label{ap:mrt}
			From the perspective of WIT, the MRT precoder maximizes $\lvert{\boldsymbol{h}_{n}^H \boldsymbol{w}_{\mathrm{I}, n}}\rvert = \lVert{\boldsymbol{h}_{n}}\rVert s_{\mathrm{I}, n}$ and maximizes the rate \eqref{eq:R}. From the perspective of WPT, the MRT precoder maximizes $(\boldsymbol{h}_{n}^H \boldsymbol{w}_{\mathrm{I/P}, n})(\boldsymbol{h}_{n}^H \boldsymbol{w}_{\mathrm{I/P}, n})^* = \lVert{\boldsymbol{h}_{n}}\rVert^2 s_{\mathrm{I/P}, n}^2$ and maximizes the second and fourth order DC terms \eqref{eq:y_I2}--\eqref{eq:y_P4}. Therefore, MRT is the global optimal information and power precoder.
		\end{subsection}
	\end{appendix}

	\bibliographystyle{IEEEtran}
	\bibliography{IEEEabrv,library.bib}
\end{document}